\newlength{\intwidth}
\newcommand\om\omega
\newcommand{\foralll}{\forall\hspace{1mm}}
\newcommand{\existss}{\exists\hspace{1mm}}
\renewcommand\ell{{ l }}
\def\cN{\mathcal N}
\def\cC{\mathcal C}
\def\cB{\mathcal B}
\def\cS{\mathcal S}
\def\cA{\mathcal A}
\newcommand\vp\varphi
\newcommand\ka\kappa
\newcommand\te\theta
\newcommand\vka\varkappa
\newcommand\de\delta
\newcommand\La\Lambda
\newcommand\la\lambda
\newcommand\ga\gamma
\newcommand\Ga\Gamma
\newcommand\cI{\mathcal I}
\newcommand\cJ{\mathcal J}
\newcommand\cD{\mathcal{D}}
\newcommand{\triple}[1]{{\left\vert\kern-0.25ex\left\vert\kern-0.25ex\left\vert #1
        \right\vert\kern-0.25ex\right\vert\kern-0.25ex\right\vert}}
\newcommand{\norm}[1]{\left\lVert#1\right\rVert}
\newcommand{\bR}{\mathbb{R}}
\newcommand{\bN}{\mathbb{N}}
\newcommand{\bE}{\mathbb{E}}
\newcommand{\bP}{\mathbb{P}}
\newcommand{\bZ}{\mathbb{Z}}
\renewcommand*{\@fnsymbol}[1]{\ensuremath{\ifcase#1\or *\or \star\or ***\or
   \mathsection\or \mathparagraph\or \|\or **\or \dagger\dagger
   \or \ddagger\ddagger \else\@ctrerr\fi}}
\newcommand\ep\varepsilon
\newtheorem{theorem}{Theorem}[section]
\newtheorem{lemma}[theorem]{Lemma}
\newtheorem{proposition}[theorem]{Proposition}
\newtheorem{definition}[theorem]{Definition}
\theoremstyle{definition} 
\newtheorem{remark}[theorem]{Remark} 
\newtheorem{example}[theorem]{Example}
\numberwithin{equation}{section}
\renewcommand\leq\leqslant
\renewcommand\geq\geqslant
\title[On the probability of the CJT or the MoA]{On the probability of the Condorcet Jury Theorem or the Miracle of Aggregation}
\author{\'Alvaro Romaniega Sancho}
\address{Instituto de Ciencias Matem\'aticas, Consejo Superior de
 Investigaciones Cient\'\i ficas, 28049 Madrid, Spain}
\email{alvaro.romaniega@icmat.es}
\begin{document}
\maketitle

\begin{abstract}
	The Condorcet Jury Theorem or the Miracle of Aggregation are frequently invoked to ensure the competence of some aggregate decision-making processes. In this article we explore an estimation of the prior probability of  the thesis predicted by the theorem (if there are enough voters, majority rule is a competent decision procedure). We use tools from measure theory to conclude that, prima facie, it will fail almost surely. To update this prior either more evidence in favor of competence would be needed or a modification of the decision rule. Following the latter, we investigate how to obtain an almost sure competent information aggregation mechanism for almost any evidence on voter competence (including the less favorable ones). To do so, we substitute simple majority rule by weighted majority rule based on some weights correlated with epistemic rationality such that every voter is guaranteed a minimal weight equal to one.
 
\end{abstract}
\section{Introduction}\label{sec:intro}
\subsection{Epistemic processes of information aggregation: the Condorcet Jury Theorem and the Miracle of Aggregation}
Back in 1785 Condorcet published a result to show how voting could be useful to efficiently aggregate the private information of a group of agents. The result holds when we face a dichotomous choice between $A$ and $B$ which has a correct option, say $A$. For instance, the group of agents can be a jury which has to decide if the defendant is guilty in a criminal trial. Each agent is assumed to be more competent than a coin toss and their choices are assumed to be independent from each other. In this setting, Condorcet showed that if votes are aggregated using simple majority rule ($A$ wins if its number of votes is greater than the number of votes of $B$ with an odd number of voters), the probability of choosing the right option increases to one as the number of voters goes to infinity. Thus, \textit{we can efficiently aggregate information}: if the voters are slightly competent, we can produce an (almost) perfectly competent decision procedure, i.e., the probability of being right is as close as one as we want if there are enough voters. This is the Condorcet Jury Theorem (CJT). More precisely, the $i$-th voter is a random variable $X_i$ over $\{0,1\}$ with probability of choosing $A$ equal to $\bP\left(X_i=1\right)=p>1/2$ and, obviously, the probability of choosing $B$ is $\bP\left(X_i=0\right)=1-p<1/2$. Thus, if voters are i.i.d. random variables and the aggregation procedure is simple majority rule, i.e., $A$ is chosen if
\begin{equation*}
	\sum_{i=1}^n X_i>\frac{n}{2},
\end{equation*}
where $n$ is an odd number which represents the number of voters, then 
\begin{equation}\label{eq:CJT cond}
	\lim_{n\to\infty}\bP\left(\sum_{i=1}^n X_i>\frac{n}{2}\right)=1.
\end{equation}

If we consider a similar setting, but now most of the voters are no better than chance, i.e., $p=1/2$, except for a group of informed voters such that $p=1$, then \eqref{eq:CJT cond} holds too. The idea is simple, most of the voters behave like noise that cancels out (because $p=1/2$) and the informed group introduces a ``bias'' toward the right choice. More precisely, the result follows from the Weak Law of Large Numbers (WLLN). This second case is sometimes known in the literature as Wisdom of Crowds (WoC) or Miracle of Aggregation (MoA).

In general, we can ask whether \eqref{eq:CJT cond} happens for an arbitrary distribution of voter competence, i.e., for a general sequence of probabilities $\{p_i\}_{i=1}^\infty$ where $p_i\coloneqq\bP\left(X_i=1\right)$ (now voters are not necessarily identical). This is usually called the \textit{asymptotic CJT} for independent voters. The cases considered above were: 
\begin{itemize}
	\item Condorcet: $p_i=1/2+\varepsilon$ where $\varepsilon\in (0,1/2]$ $\forall i\in \bN$.
	\item MoA: given $n$ voters, $p_i=1/2$ for $(1-\varepsilon)n$ and $p_j=1$ for $\varepsilon n$ voters where $\varepsilon\in (0,1])$.\footnote{To simplify the exposition we have assumed that $\varepsilon n$ is an integer, but we should write $\varepsilon_n n\coloneqq\lfloor \varepsilon \cdot n\rfloor$, i.e., take the integer part.}
\end{itemize}
We will denote by $\cC_I$ the subset of $[0,1]^\infty$ or $[0,1]^\bN$, i.e., the space of sequences with elements in $[0,1]$ such that \eqref{eq:CJT cond} holds for independent voters. We will say that a sequence of probabilities satisfies the Condorcet Jury Property (CJP) if \eqref{eq:CJT cond} holds, i.e., the thesis of the CJT holds. This generalizes the case considered by Condorcet of $p_i=1/2+\varepsilon$. $\cC_I$ is an infinite set, i.e., there are an infinite number of sequences satisfying the CJP. This was proved in \cite{BP98}, where a complete characterization of the CJP was given.

Nevertheless, we can ask how ``large'' this set is compared with its complement (that is, the set of sequences where the CJP does not hold). Answering this will be one of the main topics of this paper. That is, we might want to know if the thesis of the theorem is likely to hold in order to evaluate the efficiency of some decision procedures or in order to know if it is ``legitimate'' to invoke this theorem. It does not seem plausible to measure the $p_i$ and check whether $(p_i)_{i=1}^\infty$ belongs to $\cC_I$ as this is impractical (or even impossible) for this simple model of the real world. A more sensible way could be to ask how likely is the event $\cC_I$ given that the main ``parameters'' of the problem, $(p_i)_{i=1}^\infty$, are partially unknown. Following a Bayesian approach, one should start computing the prior probability. One of the main difficulties of this is how to define and choose a probability distribution or measure in $[0,1]^\bN$. This will be analyzed in Section \ref{sec:apriori apc}.
\subsection{Weighted majority rule}
Let $w\coloneqq\left(w_i\in\bR\right)_{i=1}^\infty$ be some weights. Now, to each voter we associate the random variable
\begin{equation}
	X_i=\begin{cases*}
		+1 &\text{ if it votes A,}\\
		-1 &\text{ if it votes B.}
	\end{cases*}
\end{equation}
Weighted majority rule implies that the social choice function is $\text{sign}(X^w_n)$ being indifferent between the two if $X^w_n=0$ where
\begin{equation}\label{eq:weighted maj}
	X^w_n\coloneqq \sum_{i=1}^n w_i X_i.
\end{equation}
The larger the weight (\textit{ceteris paribus}), the greater the influence of the voter. The previous case of simple majority rule is recovered if $w_i=w_j$ $\foralll i,j$. Notice that if we assigned $X=0$ for the wrong option, the weights would be irrelevant in that case, so we have to consider the symmetric case $X\in\{-1,1\}$.

Weighted majority rules have been widely explored in the literature. For instance, in \cite{NP82} it is shown that, under some assumptions, weighted majority rule is the optimal decision rule for dichotomous choices and that the weights are given by
\begin{equation}\label{eq:optimal weights}
	w_i=\mathcal{W}(p_i)\coloneqq\log\left(\frac{p_i}{1-p_i}\right).
\end{equation}
Obviously, $\mathcal{W}:[0,1]\to\bR$ and, in particular, $\lim_{p\to 0}\mathcal{W}(p)=-\infty$ and $\lim_{p\to 1}\mathcal{W}(p)=\infty$. Also, $\mathcal{W}(p)<0$ for $p<1/2$. Some intuitions of this result were unveiled by Nobel-prize winner Lloyd Shapely and Bernard Grofman, \cite{SG84}. Considering the non-asymptotic CJT, suppose we have voters with competences (0.9, 0.9, 0.6, 0.6, 0.6). We have several options:
\begin{itemize}
	\item Under expert rule, $w_i=0$ only for $i=2,3,4,5$, 
	$$\bP\left(\sum_{i=1}^5 w_i X_i>0\right)=0.9\,.$$
	\item Under simple majority rule, $w_i=1$ for $i=1,2,3,4,5$, 
	$$\bP\left(\sum_{i=1}^5 w_i X_i>0\right)\approx 0.877\,,$$
	which improves the mean competence, but it is below expert rule.
	\item Under weighted majority rule, $w_i=1/3$ for $i=1,2$ and $w_i=1/9$ for $i=3,4,5$, 
	$$\bP\left(\sum_{i=1}^5 w_i X_i>0\right)\approx 0.927\,,$$
	which improves the previous results.
\end{itemize}
This result might be counterintuitive, since we are assigning nonzero weights to the less competent but, nevertheless, improving the total probability with respect to the expert rule case. This result is clearer if we note that these less competent members can break the tie if the two most competent individuals disagree. The use of weights \eqref{eq:optimal weights} was considered an important result by the aforementioned authors. They conclude \cite{SG84}:
\begin{quote}
	While the results of this essay seem particularly appropriate to analysis of the problem of 'information pooling', in which the task is to weigh the advice of 'experts' or to reconcile 'expert' and 'non-expert' conflicting opinion; we believe Theorem II [this is \eqref{eq:optimal weights}] to be of considerable general importance for democratic theory. 
\end{quote}
In that sense, we will also consider the CJT for a weighted majority rule and how probable it is now the CJT when weights are included. Nevertheless, we will explore a different kind of weights: they will be strictly positive, bounded from below and above and subject to some stochastic error. They will be of the form $w=w_d+\varepsilon$ where $w_d$ is a deterministic function depending on $p$ and $\varepsilon$ the error. See Figure \ref{fig:weights} for a comparison between $w_d$ and $\mathcal{W}$. 
\begin{figure}[h]
	\centering
	\includegraphics[width=.55\linewidth]{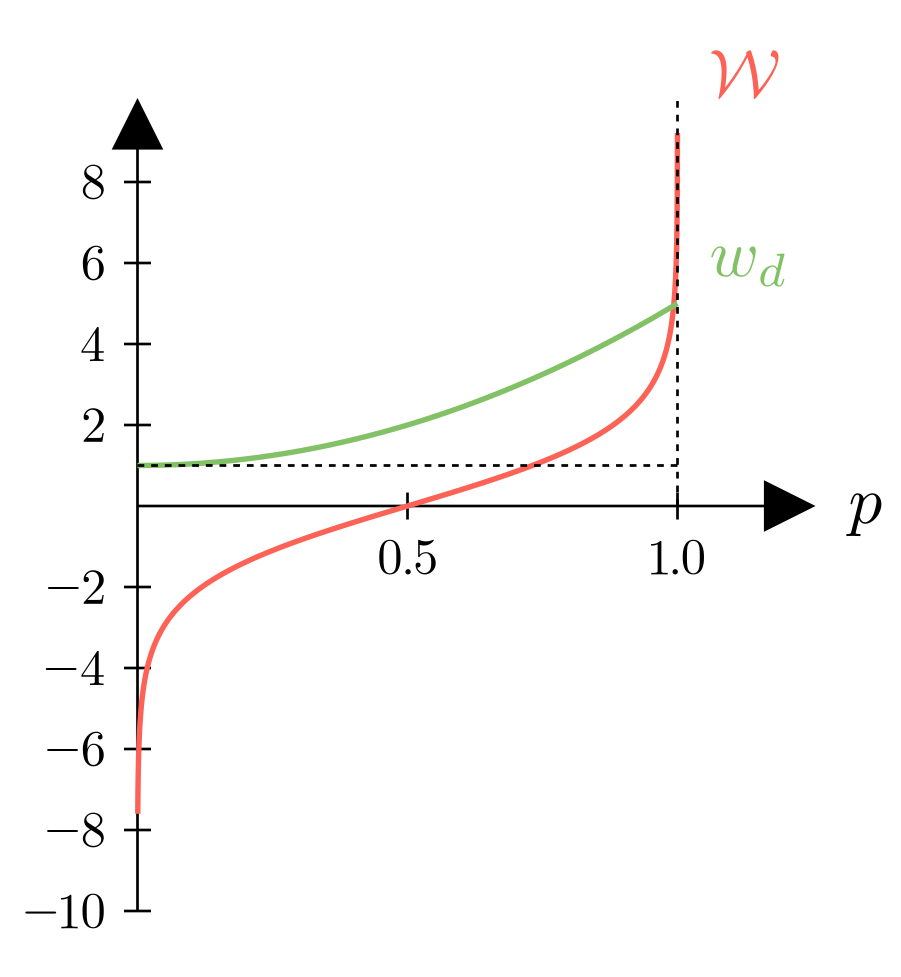}
	\caption[Comparison between optimal weights and bounded weights]{Comparison of optimal weights and bounded weights.}
	\label{fig:weights}
\end{figure}
We use $w_d$ instead of $\mathcal{W}$ because, although mathematically optimal, they can be problematic in a real life situation. In particular, as we said, $\mathcal{W}(p)<0$ for $p<1/2$. This has the effect of, for $p_i<1/2$,
\begin{equation*}
	\mathcal{W}(p_i) X_i=|\mathcal{W}(p_i)|\left(-X_i\right)=:|\mathcal{W}(p_i)|\tilde{X}_i=\mathcal{W}(1-p_i)\tilde{X}_i
\end{equation*}
and now $\bP\left(\tilde{X}_i=1\right)=\bP\left({X}_i=-1\right)=1-p_i>1/2$. This is equivalent to reversing  the outcome of the vote for these particular voters.
To avoid this, we will only consider weights in an interval of the form $[1,W]$ for some $W>1$, i.e., no voter loses, formally, its weight on the election. In the same manner, we will also assume there is an error of measurement, so weights are not perfectly correlated to competence. As we said, real weights, $w$, will be the sum of deterministic weights, $w_d(p)$, plus a random error, $\varepsilon$.
\subsection{Motivation of the paper and main results.} The Condorcet Jury Theorem or the Miracle of Aggregation are frequently invoked to ensure the competence of some aggregate decision-making processes. Furthermore, to the best of the author's knowledge, the current literature focuses on sufficient conditions (in different circumstances) to ensure the thesis of the theorem, but less attention has been paid to the applicability of the results. 

Our objective in this work is to set the framework for the study of the applicability of these important results. As directly checking the hypotheses of the theorem is unrealistic, we use a probabilistic approach with Bayesian grounds. Here, we study under which circumstances the thesis predicted by the theorem is likely to hold. Depending on our available evidence on voter competence, which will be measured by a bias in a second-order probability measure, the thesis of the theorem will happen almost surely or almost never. See Theorem \ref{th:unweig and indep} and Theorem \ref{th:unweig, indep and pos bias} for details. As we will see in these theorems, the opposite of the CJP could occur almost surely, i.e., majority rule chooses the wrong option a.s. Therefore, this gives another reason to study the applicability in order to  ensure that we are not in this situation.

Furthermore, we also apply this framework in the case of weighted majority rule with stochastic (or noisy) weights. It is concluded that these stochastic weights can fix almost any voter profile of incompetence, see Theorem \ref{thm:measure weighted CJT}.
\subsection{Organization of the paper}

The paper is organized as follows. In Section \ref{sec:notation} we introduce the notation and some definitions that will be used in the rest of the paper. In Section \ref{sec:apriori apc} we introduce the topic of the applicability of the CJT and present one of the main results, it fails almost surely. In Appendix \ref{ap:extension} we present a (more technical) generalization of the result of the preceding section. In Section \ref{sec:weighted CJT} we analyze the applicability when weighted majority rule is used instead of simple majority rule. An important part, Appendix \ref{ap:psy and phil}, is devoted to its practical implementation. Section \ref{sec:conclusion} gives an end to this paper offering some concluding remarks.

The main proofs are relegated to the appendices. In Appendix \ref{ap:proof unweig} and Appendix \ref{ap:extension} we prove the theorems of the unweighted situation.  Finally, in Appendix \ref{ap:proof weight} we prove the theorem concerning weighted majority rule.
\section{Notation and some definitions}\label{sec:notation}
The space of sequences with elements in $[0,1]$ will be denoted by $[0,1]^\bN$ where $\bN\coloneqq\bZ_{>0}=\{1,2,3,\ldots\}$. Odd numbers will be represented by $\mathbb{O}$. The cardinality of a set $A$ will be represented by $|A|$. The (uncentered) moments of a measure (this can be understood as a probability distribution) $\nu_0$ will be denoted by:
\begin{equation}\label{eq:moments}
	m^i\coloneqq \int_{[0,1]} x^i~ d\nu_0 (x)\le 1.
\end{equation}
In particular, $m\coloneqq m^1=b+\frac{1}{2}$ following Definition \ref{def:biased measure}. Also, we will denote $\epsilon_1\coloneqq \nu_0(\{1\})$. More generally, we define $\epsilon_{1-\varepsilon_0,1}\coloneqq \nu_0\left([1-\varepsilon_0,1]\right)$.   

We denote by $(\Omega,\mathbb{P})$ an abstract probability space where every random variable $X_i$ is defined. Given two measures $\nu,~\nu'$ we will say that $\nu$ is absolutely continuous with respect to $\nu'$ and write $\nu\ll\nu'$ if for every Borel set $A$, $\nu'(A)=0$ implies $\nu(A)=0$. We will write its Radon–Nikodym derivative as $\frac{d\nu}{d\nu'}$. This can be seen as the probability density of a distribution given by $\nu$. If there is a $C>0$ such that $a\le C b$ we write $a\lesssim b$.

\section{On the {a priori} applicability of those results}\label{sec:apriori apc}
\subsection{Preliminary example.}
We are going to use some concepts of measure theory, a well-established mathematical branch.
In the segment $[0,1]$ we ``measure'' a subset $A$ using the length or, in general, the Lebesgue measure $\lambda$. Thus, if $[a,b]\subset[0,1]$, its measure is $\lambda\left([a,b]\right)=b-a$, its length. Let us now add another dimension. In the square $[0,1]^2$ we ``measure'' a subset $A$ using the measure on $\bR^2$ $\mu=\lambda\times\lambda=\prod_{n=1}^{2}\lambda$, that is, $\mu$ is the area. For instance, if $A$ is the smaller square $A=[0,\frac12]^2$, then
\begin{equation*}
	\mu(A)=\lambda\left(\left[0,1/2\right]\right)\lambda\left(\left[0,1/2\right]\right)=\left(\frac12\right)^2.
\end{equation*}
We could use this measure for some probability events as follows. For instance,  if we define $X\coloneqq X_1+X_2$, $X_i\sim Bernoulli(p_i)$ and $p_i$ are unknown, then $\{\bE[X]<1\}$ has measure $1/2$ w.r.t. $\mu$. Indeed, $\bE[X]=p_1+p_2<1$ and by basic geometry
$$
\mu\{(x,y)\in[0,1]^2~/~x+y<1\}=\frac12.
$$
In the same fashion, we can see that the measure of $\{\bE[X]\le 2\}$ is 1 as $p_1,p_2\le 1$ and, similarly, the measure of $\{\bE[X]\ge 2\}$ is 0. We then say that the event $\{\bE[X]\le 2\}$ happens almost surely or $\mu-$almost surely and $\{\bE[X]\ge 2\}$ does not happen $\mu-$almost surely ($\mu$-a.s.). In this setting, we can think of $\mu$ as a ``\textit{meta-probability measure}'' or a second-order probability measure, it assigns probabilities (or measures) to some events of the parameters of the probability distributions of some random variables of our interest. Thus, we have two different probability spaces\footnote{The $\sigma$-algebra will be the standard one in each case and thus it will be implicitly assumed.}:
\begin{itemize}
	\item	\textit{Standard probability space} $(\Omega,\bP)$: the space (with its respective probability measure) depending on some parameters (fixed) where the problem is formulated. In our previous example it was given by the random variable $X:\Omega\to\bR$ with $\Omega$ the sample space and where the distribution of $X\equiv X_{p_1,p_2}$ depends on the fixed parameters $p_1$ and $p_2$. That is, for a measurable set $A$
	\begin{equation*}
		\bP\left(X\in A\right)=P_{X}(A,p_1,p_2).
	\end{equation*}
	\item 	\textit{Meta-probability space} $(\mathfrak{P},\mu)$: the space $\mathfrak{P}$ (with its respective probability measure $\mu$) of the parameters of the previous random variable. In our previous example it was given by $\mathfrak{P}=[0,1]^2$ and $\mu=\lambda\times\lambda$, the standard Lebesgue measure on the square $[0,1]^2$.
\end{itemize}
Now, it might be the case that we do not know the value of $p_1$ and $p_2$ but nevertheless want to know how ``likely'' will be that, for instance, $\bE[X]<1$. As we saw above, this is a problem involving the two probability spaces:
\begin{itemize}
	\item Standard probability space, $\bE[X_{p_1,p_2}]=\int_{\Omega}X_{p_1,p_2}(\omega)d\bP(\omega)=p_1+p_2$.
	\item Meta-probability space, $\mu\left(\{(p_1,p_2)\in[0,1]^2~|~p_1+p_2<1\}\right)=1/2$.
\end{itemize}
Notice also that if we chose a different $\mu$, the associated measure of each event would probably change, i.e., we have to choose the measure on $\mathfrak{P}$. From a Bayesian point of view, if we want to consider the prior probability, it can be assumed that this measure is not ``biased'' in any particular direction. That is, if we have no particular evidence to assume the contrary or prior to collect any evidence, it seems reasonable to impose that, for instance,
\begin{equation*}
	\mu(\{p_1\in[0,1/2)\})=\mu(\{p_1\in(1/2,1]\}).
\end{equation*}
\subsection{The CJT and measures on $[0,1]^\bN$}
For the CJT, given our previous definitions, we have:
\begin{itemize}
	\item	\textit{Standard probability space} is also denoted by $(\Omega,\bP)$ where the main random variables involved are $\frac{1}{n}\sum_{i=1}^n X_i$ for $n\in\bN$ and the event of our interest is given by \eqref{eq:CJT cond}.
	\item 	\textit{Meta-probability space} $(\mathfrak{P},\mu)$ equals $\left([0,1]^\bN,\mu\right)$. Here we are not interested in measures on $[0,1]^2$, but on $[0,1]^\infty$ or $[0,1]^\bN$, i.e., the space of sequences with elements in $[0,1]$, as $p_n\in[0,1]$ and the parameters of the problem are $\{p_n\}_{n=1}^\infty$.
\end{itemize}
We now turn into the problem of finding $\mu$ (or, more precisely, a set of $\mu$). A natural measure to consider is
\begin{equation}\label{eq:inf lebesgue}
	\mu=\prod_{n=1}^{\infty}\lambda,
\end{equation}
which is the generalization of the measure on $\bR^2$ considered above. It is well-defined by Kolmogorov's Extension Theorem. This measure has the property of being centered in the sense that the mean value (first moment) of $\lambda$ is
\begin{equation}
	\int_{[0,1]} x ~ d\lambda (x)=\frac12.
\end{equation}
However, we are going to consider more general ``centered'' measures than the one in \eqref{eq:inf lebesgue}, i.e., a larger class. Before the precise definition, we need to introduce the concept of distances and divergences of probability measures, say $d$. These objects tell us, in a sense to be precise in Appendix \ref{ap:proof unweig} , how different two distinct $\mu$ and $\mu'$ assign measures to an arbitrary set $A$. If $d(\mu,\mu')=0$, the measures are identical and if $d$ increases, so does the discrepancy for some sets. There are several ways of doing so, but two of the most important examples are the total variation distance (the statistical distance) and the Kullback--Leibler divergence (associated to the Shannon--Boltzmann entropy). In fact, we are going to consider a larger set, that will be denoted by $\cD$ and which will be defined precisely in  Definition \ref{rem:def cD}. To ease the exposition here, it can be understood that $d$ below is either the total variation distance or the Kullback--Leibler divergence. We are ready to define the concept of centered measures.
\begin{definition}\label{def:centered measure}
	A probability measure $\mu=\prod_{n=1}^{\infty}\nu_n$ on $[0,1]^\bN$ will be centered if there exists a probability measure on $[0,1]$, $\nu_0$, such that $\nu_n\ll\nu_0~\foralll n\ge 1$ (see Section \textnormal{\ref{sec:notation}} for notation),
	\begin{equation}\label{eq:centered}
		\int_{[0,1]} x~ d\nu_0 (x)=\frac12
	\end{equation}
	and
	\begin{equation}\label{eq:finite distance}
		\sum_{n=1}^\infty d(\nu_n,\nu_0)<\infty,
	\end{equation}	
with $d\in\cD$.
\end{definition}
\begin{example}\label{ex:cent}
	The case considered in \eqref{eq:inf lebesgue} corresponds to the case $\nu_0=\lambda$ and $\nu_0=\nu_n$ $\foralll n$ positive integers, so $d(\nu_0,\nu_n)=d(\nu_0,\nu_0)=0$	(by definition of distance and divergence, see Definition \ref{def:Div} and \ref{def:distance}) and then,
	\begin{equation*}
			\sum_{n=1}^\infty d(\nu_n,\nu_0)=0<\infty.
	\end{equation*}
\end{example}
The idea is simple, the measure $\mu$ is not too far (in the sense that the sum of distances or divergences does not go to infinity) from a product measure $\prod_{n=1}^{\infty} \nu_0$ of identical measures on $[0,1]$ and these measures have mean 1/2. This generalizes \eqref{eq:inf lebesgue} in two ways. First, the measures of the product are not necessarily identical. We allow the measure to be a ``perturbation'' of $\mu_0$. Second, the measure $\nu_0$ is not necessarily the Lebesgue measure, but a measure with mean $1/2$, i.e., we only need this measure to have the same first moment as the Lebesgue measure on $[0,1]$. For instance, we can have atomic measures, i.e., $\nu_0(\{x\})>0$ for some $x$. This is not allowed in the standard Lebesgue measure, as every single point has measure zero. In particular, as we said in Section \ref{sec:notation} we will define $\epsilon_1\coloneqq \nu_0(\{1\})$, that is, there is a probability $\epsilon_1$ such that each voter is going to vote the correct option almost surely as in the MoA. More generally, we define $\epsilon_{1-\varepsilon_0,1}\coloneqq \nu_0\left([1-\varepsilon_0,1]\right)$.   

With these measures, the CJP will not hold almost surely. It is important to note that as we have a complete characterization, we are not saying that the hypothesis of the theorem (CJT) will not hold, but \textit{that the thesis} (CJP) will not hold. The latter implies the former but the former implies the latter only if the conditions are necessary too. More precisely:
\begin{theorem}\label{th:unweig and indep}  Almost surely independent Condorcet Jury Theorem does not hold for a centered measure $\mu$, that is:
	\begin{equation}
		\mu(\mathcal{C}_I)=0.
	\end{equation}
\end{theorem}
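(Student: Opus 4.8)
The plan is to show that, under any centered $\mu$, the aggregated bias $\sum_{i=1}^n(p_i-\tfrac12)$ fails to diverge to $+\infty$; since such divergence is essentially forced by the characterization of the CJP in \cite{BP98}, this will give $\mu(\cC_I)=0$. Concretely, I would establish that $\mu$-almost surely
\begin{equation*}
\liminf_{n\to\infty}\sum_{i=1}^n\left(p_i-\tfrac12\right)=-\infty,
\end{equation*}
and in fact that along a subsequence $n_k$ the sum drops below $-c\sqrt{n_k\log\log n_k}$. Writing $S_n=\sum_{i=1}^nX_i$ with $X_i\sim\mathrm{Bernoulli}(p_i)$, so that $\bE[S_n]-\tfrac n2=\sum_{i=1}^n(p_i-\tfrac12)$ and $\var(S_n)=\sum_{i=1}^np_i(1-p_i)\le n/4$, Chebyshev's inequality then yields
\begin{equation*}
\bP\!\left(S_{n_k}>\tfrac{n_k}{2}\right)\le\frac{\var(S_{n_k})}{\big(\sum_{i\le n_k}(p_i-\tfrac12)\big)^2}\le\frac{n_k/4}{c^2\,n_k\log\log n_k}\xrightarrow[k\to\infty]{}0,
\end{equation*}
so \eqref{eq:CJT cond} fails and $(p_i)\notin\cC_I$.

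To control the bias I regard $(p_i)$ under $\mu=\prod_n\nu_n$ as independent with $p_i\sim\nu_i$, and split $p_i-\tfrac12=Z_i+m_i$ where $Z_i\coloneqq p_i-\int_{[0,1]}x\,d\nu_i$ is mean-zero and bounded by $1$, and $m_i\coloneqq\int_{[0,1]}x\,d\nu_i-\tfrac12$ is deterministic. Because $\int x\,d\nu_0=\tfrac12$, the shifts satisfy $|m_i|=\big|\int x\,d(\nu_i-\nu_0)\big|\lesssim d(\nu_i,\nu_0)$ when $d$ is the total variation distance (and likewise for the Kullback--Leibler divergence via Pinsker's inequality, and for every $d\in\cD$); hence \eqref{eq:finite distance} gives $\sum_i|m_i|<\infty$ and $\sum_{i=1}^nm_i$ converges. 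Thus the $\liminf$ of $\sum(p_i-\tfrac12)$ coincides with that of the mean-zero walk $\sum Z_i$.

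The crux is the oscillation of $\sum_{i=1}^nZ_i$. If $\nu_0=\delta_{1/2}$ the claim is trivial, since $\nu_i\ll\nu_0$ forces $p_i=\tfrac12$ a.s.\ and $\bP(S_n>n/2)\to\tfrac12$; otherwise $\var_{\nu_0}(x)>0$, and since $d(\nu_i,\nu_0)\to0$ forces convergence of the first two moments, $\var(Z_i)\to\var_{\nu_0}(x)>0$, so $s_n^2\coloneqq\sum_{i=1}^n\var(Z_i)\to\infty$ with the uniform bound $|Z_i|\le1=o\!\big(s_n/\sqrt{\log\log s_n^2}\big)$. Kolmogorov's law of the iterated logarithm, applied to $(Z_i)$ and to $(-Z_i)$, then gives $\liminf_n\sum_{i=1}^nZ_i/\sqrt{2s_n^2\log\log s_n^2}=-1$ a.s., whence $\liminf_n\sum_{i=1}^nZ_i=-\infty$ at the advertised rate (note $s_n^2\asymp n$). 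Combining with the convergent series $\sum m_i$ completes the argument.

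The step I expect to be the main obstacle is extracting this oscillation quantitatively from the abstract centered hypothesis: one must check that each divergence $d\in\cD$ is strong enough that summability \eqref{eq:finite distance} both bounds $\sum|m_i|$ and keeps $\var(Z_i)$ away from $0$, so that $s_n^2\to\infty$ and the hypotheses of the law of the iterated logarithm hold. Once the properties of the class $\cD$ are in hand this is routine, but it is precisely where the centered assumption does its work.
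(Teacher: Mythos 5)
Your argument is correct in outline but follows a genuinely different route from the paper's. The paper works under the i.i.d.\ reference measure $\mu_0=\prod_n\nu_0$, invokes the full characterization of \cite{BP98} and kills its two conditions separately --- the normalized statistic $Q_n$ is shown to converge in distribution to a nondegenerate Gaussian via the SLLN, the CLT and Slutsky's theorem (so it cannot diverge to $+\infty$ a.s.), while the ``all-ones majority'' condition is excluded by the SLLN applied to $\mathbbm{1}\{p_i=1\}$, with the purely atomic case $\nu_0=\frac12(\delta_0+\delta_1)$ handled by a separate random-walk recurrence argument --- and finally transfers the conclusion from $\mu_0$ to $\mu$ by Kakutani's dichotomy (Lemma \ref{lem:Kakutani}). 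You instead negate \eqref{eq:CJT cond} directly: the non-identically-distributed Kolmogorov LIL applied to the centered, bounded coordinates under $\mu$ itself produces a subsequence along which $\sum_{i\le n_k}(p_i-\frac12)\le -c\sqrt{n_k\log\log n_k}$, and Chebyshev then forces $\bP(S_{n_k}>n_k/2)\to0$. This buys you three things: no appeal to the \cite{BP98} characterization (your Chebyshev bound simultaneously rules out both of its conditions), no separate treatment of the atomic case (there your bound reads $0$ divided by a positive quantity), and no Kakutani step, since you work under $\mu$ directly; the price is that you must extract quantitative information (convergence of the first two moments, hence $s_n^2\asymp n$) from the abstract hypothesis $d\in\cD$, which the paper avoids by running the probabilistic argument under the i.i.d.\ measure $\mu_0$.

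One intermediate claim is false as stated and needs repair. For a general $d\in\cD$ (in particular for the Kullback--Leibler divergence), the defining property only gives $1-H(\nu_i,\nu_0)\lesssim d(\nu_i,\nu_0)$, and total variation is controlled by the \emph{square root} of the Hellinger affinity defect; hence $|m_i|\lesssim\sqrt{d(\nu_i,\nu_0)}$, and summability of $d(\nu_i,\nu_0)$ does \emph{not} yield $\sum_i|m_i|<\infty$ (take $d(\nu_i,\nu_0)\asymp i^{-2}$, so that $\sqrt{d(\nu_i,\nu_0)}\asymp i^{-1}$). Your proof survives because you do not actually need convergence of $\sum_i m_i$, only that the drift be negligible against the LIL scale: Cauchy--Schwarz gives
\begin{equation*}
\sum_{i\le n}|m_i|\;\lesssim\;\sqrt{n}\left(\sum_{i=1}^{\infty} d(\nu_i,\nu_0)\right)^{1/2}=O\left(\sqrt{n}\right)=o\left(\sqrt{n\log\log n}\right),
\end{equation*}
so it cannot cancel the $-c\sqrt{n_k\log\log n_k}$ excursion. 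With that substitution the argument closes.
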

That is, no matter which measure we choose (with the reasonable condition of Definition \ref{def:centered measure}), it will assign probability zero to the CJP.
\begin{remark}
	We should distinguish two concepts, impossible events and probability (or measure) zero events. The first are associated with the empty set $\emptyset$ and the second with a set of probability or measure zero, i.e., a null set. For instance, take a uniform random variable $X:\Omega\to [0,1]$ over $[0,1]$ and let $x_0\coloneqq\pi/4\in[0,1]$. $X$ will always (surely) give a number between 0 and 1. Thus,
	$\{X>1\}\coloneqq \{\omega~/~X(\omega)>1\}$ is an impossible event.  $\{X=x_0\}$ is not impossible (some number must be chosen and it also had probability zero) but  will not happen almost surely (the probability is zero). This implies that if we run the variable a large number of times $n$, $\frac{\text{number of times }X=x_0}{n}\to 0$ as $n\to \infty$ (probabilistic application to frequencies).
\end{remark}
Again, the idea is that whatever measure $\mu$ we choose with the condition that $\mu$ is centered or not ``biased'', the CJP will not hold almost surely. One could think that this was somehow expected as soon as we chose $m^1=1/2$: we are choosing $m^1=1/2$ but, as already Condorcet noticed, you need a probability greater than $1/2$. Thus, one does not expect the CJT to hold. Hence, the theorem is more or less trivial.  Nevertheless, this intuition would be incorrect, as it would be confusing the two probability spaces. Indeed, the first and second $1/2$ belong to two different spaces
\begin{itemize}
	\item $\bE[X_i]=\int_{\Omega}X_i(\omega)d\bP(\omega)=p_i$ and this must be greater than 1/2 in the standard CJT (where $p_i=p~\forall i\in\bN$),
	\item $m^1=\int_{[0,1]} x~ d\nu_0 (x)=\frac12$.
\end{itemize}
The confusion is obvious if we consider the homogeneous case $p_i=p~\forall i\in\bN$, the original Condorcet's theorem. Then, we would have $\mathfrak{P}=[0,1]$. Imposing that $\mu=\nu_0$ is centered around 1/2, i.e., $\int_{[0,1]} x~ d\mu (x)=\frac12$ would not imply that the CJT fails almost surely. In fact, it would have probability $\mu((1/2,1])>0$ unless $\mu=\delta_{1/2}$, the Dirac measure at $1/2$. For instance, if $\mu=\lambda$, then the CJT would have probability $1/2$.

A more subtler argument would say that, on average, probabilities are approximately (and asymptotically) $1/2$ because $m^1=1/2$. More precisely, $\frac{1}{n}\sum_{i=1}^{n}p_i\to \frac{1}{2}$ a.s. as $n\to\infty$. Thus, again, we cannot expect the CJT to hold in that situation because the probabilities, on average, should be greater than $1/2$. But, this intuition is incorrect too. In the following two examples we are going to construct an uncountable set of sequences where the CJT holds but $\frac{1}{n}\sum_{i=1}^{n}p_i\to \frac{1}{2}$ as $n\to\infty$, i.e., probabilities are on average 1/2.
\begin{example} Let 
	$$\cC_1\coloneqq \left\{(p_n)_{n=1}^\infty\in[0,1]^\bN~|~p_n=\frac12+\varepsilon_n,~\varepsilon_n\in \left[0,\frac12\right],~ \lim_{n\to\infty}\frac{1}{n}\sum_{i=1}^{n}\varepsilon_i=0,~ \lim_{n\to\infty}\frac{1}{\sqrt{n}}\sum_{i=1}^{n}\varepsilon_i=\infty\right\}.$$
	Then, $\cC_1\subset\cC_I$, i.e., the sequences in $\cC_1$ satisfy the CJP. Indeed, let $\bar{X}_n\coloneqq\frac{1}{n}\sum_{i=1}^{n}X_n$ as in Section \ref{sec:intro}. By definition,
	\begin{equation*}
		\bE\left(X_i\right)=p_i,~~\text{Var}\left(X_i\right)=p_iq_i,
	\end{equation*}
	where $q_i\coloneqq 1-p_i$, then
	\begin{align}\label{eq:ex1}
		\bP\left(\bar{X}_n\le 1/2\right)&=\bP\left(\bar{X}_n-\bE\left(\bar{X}_n\right)\le 1/2-\bE\left(\bar{X}_n\right)\right)\le \bP\left(|\bar{X}_n-\bE\left(\bar{X}_n\right)|\ge \bE\left(\bar{X}_n-1/2\right)\right)\le\nonumber\\
		&\le \frac{\text{Var}(\bar{X}_n)}{ \left(\bE\left(\bar{X}_n\right)-\frac12\right)^2}=\frac{\sum_{i=1}^n (1/4-\varepsilon_i^2)}{\left(\sum_{i=1}^n\varepsilon_i\right)^2}=\frac{\frac{1}{n}\sum_{i=1}^n (1/4-\varepsilon_i^2)}{\left(\frac{1}{\sqrt{n}}\sum_{i=1}^n\varepsilon_i\right)^2}\to 0
	\end{align}
	as $n\to\infty$ by Chebyshev's inequality, which can be applied because, by hypothesis, $\sum_{i=1}^n\varepsilon_i>0$ if $n$ is large enough. We have also used that
	\begin{equation*}
		\lim_{n\to\infty}\frac{1}{n}\sum_{i=1}^n (1/4-\varepsilon_i^2)=\frac14-\lim_{n\to\infty}\frac{1}{n}\sum_{i=1}^n \varepsilon_i^2=\frac14
	\end{equation*}
	because $\varepsilon_i^2\le \varepsilon_i$. Thus, $\bP\left(\bar{X}_n> 1/2\right)\to 1$ as $n\to\infty$, i.e., \eqref{eq:CJT cond}. But note that:
	\begin{equation*}
		\frac{1}{n}\sum_{i=1}^{n}p_i=\frac{1}{2}+\frac{1}{n}\sum_{i=1}^{n}\varepsilon_i\to \frac12.
	\end{equation*}
Thus, $p_i$ are, on average, $1/2$ but nevertheless the CJP holds.
\end{example}
\begin{remark} We can easily construct elements of this set as follows. Define $\varepsilon_i\coloneqq \max\{i^{\alpha},1/2\}$. Then, by the Euler--Maclaurin formula,
	\begin{equation*}
		H_n^{(-\alpha)}\coloneqq\sum _{i=1}^{n} i^{\alpha }=\frac{n^{\alpha +1}-1}{\alpha +1}+O\left(n^{\alpha}\right).
	\end{equation*}
Thus, it is enough if we take $\alpha\in(-1/2,0)$.	$H_n^{(-\alpha)}$ is the generalized harmonic number.	
\end{remark}
Now we present a second example of sets where, on average, the probabilities are 1/2 but the CJP holds. The idea behind the construction is completely different. It will also illustrate an important fact, being an element of $\cC_I$ \textit{does not necessarily depend only on the tail of the sequence.}
\begin{example} Let us fix an $m\in \bN$ greater than 1.
	Consider the sequence of $({p}_i)_{i=1}^\infty=({p}_1,\ldots,{p}_m,1,1,0,1,0,1,\ldots)$. In what follows we assume $p_i\in\{0,1\}$. In this setting where the probability is either 0 or 1, the CJT holds trivially iff $|\{i:1\le i\le n \text{ and } p_i=1\}|>|\{i:1\le i\le n \text{ and } p_i=0\}|$ for every $n$ large enough. Thus, this is equivalent to:
	\begin{equation}\label{eq:ex2}
		S_n\coloneqq|\{i:1\le i\le n \text{ and } p_i=1\}|=\sum_{i=1}^n {p}_i>\frac{n}{2}\quad \foralll n>n_0,
	\end{equation}	
both $n,n_0\in\mathbb{O}$. 	If ${p}_i=0~\foralll i\in\{1,\ldots,m\}$, then for $n=2k+1$:
	\begin{equation*}
		\frac{S_{n+m}}{n+m}-\frac{1}{2}=\frac{1+k}{2k+1+m}-\frac12=\frac{1-m}{2(m+2k+1)}<0\quad \foralll n\in\mathbb{O}.
	\end{equation*}
	But, on the other hand, if ${p}_i=1~\foralll i\in\{1,\ldots,m\}$, then:
	\begin{equation*}
		\frac{S_{n+m}}{n+m}-\frac{1}{2}=\frac{1+k+m}{2k+1+m}-\frac12=\frac{1+m}{2(m+2k+1)}>0\quad \foralll n\in\mathbb{O}.
	\end{equation*}
In general, if there are $m'$ $p_i=1$ for $1\le i\le m$ and $2m'+1>m$ then
\begin{equation*}
	\frac{S_{n+m}}{n+m}-\frac{1}{2}>0\quad \foralll n\in\mathbb{O}.
\end{equation*}
But, note again that 
\begin{equation*}
	\frac{1}{n}\sum_{i=1}^{n}p_i\to \frac12.
\end{equation*}
This latter set defines $\cC_2$.
\end{example}
Hence, as promised, $\cC_1\cup\cC_2$ is an uncountable set of sequences where the CJT holds but $\frac{1}{n}\sum_{i=1}^{n}p_i\to \frac{1}{2}$ as $n\to\infty$, i.e., probabilities are on average 1/2. These remarks warn us that the proof of Theorem \ref{th:unweig and indep} cannot rely on those intuitions and must use different ideas. This will be done in Appendix \ref{ap:proof unweig}. The basic idea is that we can only have the CJP if the sequences satisfy something similar to \eqref{eq:ex1} or \eqref{eq:ex2} of the previous examples. But these conditions are too restrictive and, thus, this set will have measure zero for the measures under consideration.

In the same manner, it could be argued that the hypotheses of the MoA are not satisfied. If the event that a proportion of voters is informed is quite rare for those measures, the MoA cannot be expected. Nevertheless, this condition of the MoA is satisfied in the following sense. First, recall that in Section \ref{sec:notation} we defined $\epsilon_1\coloneqq \nu_0(\{1\})$ and $\epsilon_{1-\varepsilon_0,1}\coloneqq \nu_0\left([1-\varepsilon_0,1]\right)$. This measures the probability that an individual voter is well-informed ($\epsilon_1$) and almost well-informed ($\epsilon_{1-\varepsilon_0,1}$, the probability of choosing the correct option is greater than $1-\varepsilon_0$ for some $\varepsilon_0$ generally small). Then, we have the following result (the cardinality of a set $A$ will be represented by $|A|$, Section \ref{sec:notation}):
\begin{proposition}\label{prop:MoA cond}
	Let $\mu$ a centered measure, $0\le\varepsilon_0<1/2$,	
	$0<\varepsilon<\epsilon_{1-\varepsilon_0,1}$ and $\delta>0$ as small as we want. Then,
	\begin{equation*}
		\lim_{n\to\infty}\frac{1}{n}|\{1\le i\le n~/~p_i\in[1-\varepsilon_0,1]\}|>\varepsilon~~ \textnormal{ a.s. }
	\end{equation*} 
	In particular, if $\varepsilon_0=0$ then the same holds with $p_i=1$ and $\epsilon_{1-\varepsilon_0,1}=\epsilon_1$.
\end{proposition}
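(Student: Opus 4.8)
The plan is to read the empirical frequency as a normalized sum of independent indicators on the meta-probability space and then apply a strong law of large numbers. Since $\mu=\prod_{n=1}^\infty \nu_n$ is a product measure, the coordinate projections $p_i$ are $\mu$-independent with $p_i\sim\nu_i$. First I would set $Y_i\coloneqq\mathbbm{1}_{\{p_i\in[1-\varepsilon_0,1]\}}$, so that the quantity in the statement is exactly $\frac1n\sum_{i=1}^n Y_i$. The $Y_i$ are then independent Bernoulli variables, bounded in $[0,1]$, with means $q_i\coloneqq\bE_\mu[Y_i]=\nu_i([1-\varepsilon_0,1])$. Note that only the summability and absolute-continuity parts of Definition \ref{def:centered measure} will be used here; the centering condition \eqref{eq:centered} plays no role.

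The key analytic input is that $q_i\to\epsilon_{1-\varepsilon_0,1}=\nu_0([1-\varepsilon_0,1])$. This should follow from the summability hypothesis \eqref{eq:finite distance}: since $\sum_n d(\nu_n,\nu_0)<\infty$ forces $d(\nu_n,\nu_0)\to 0$, and convergence in $d$ controls the discrepancy of the two measures on the fixed Borel set $[1-\varepsilon_0,1]$, one gets $|q_i-\epsilon_{1-\varepsilon_0,1}|\to 0$. For the total variation distance this is immediate from the definition; for the Kullback--Leibler divergence it follows by Pinsker's inequality (which is also where $\nu_n\ll\nu_0$ enters, guaranteeing the divergence is defined). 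I expect this step---extracting setwise convergence on $[1-\varepsilon_0,1]$ from convergence in the abstract $d\in\cD$---to be the main obstacle, since it is the only place that genuinely depends on the precise definition of $\cD$ in Definition \ref{rem:def cD}, and it would need to be verified uniformly for every admissible $d$.

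With $q_i\to\epsilon_{1-\varepsilon_0,1}$, Cesàro averaging gives $\frac1n\sum_{i=1}^n q_i\to\epsilon_{1-\varepsilon_0,1}$. To pass from the means to the random average I would invoke Kolmogorov's strong law for independent (not identically distributed) summands: because $\var_\mu(Y_i)\leq 1/4$, we have $\sum_{i=1}^\infty \var_\mu(Y_i)/i^2\leq\frac14\sum_i i^{-2}<\infty$, so $\frac1n\sum_{i=1}^n(Y_i-q_i)\to 0$ $\mu$-a.s. Combining the two limits yields $\frac1n\sum_{i=1}^n Y_i\to\epsilon_{1-\varepsilon_0,1}$ $\mu$-a.s. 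Since the hypothesis $0<\varepsilon<\epsilon_{1-\varepsilon_0,1}$ makes the limit strictly larger than $\varepsilon$, the claimed strict inequality holds almost surely (the auxiliary $\delta>0$ plays no role beyond absorbing any slack, should one prefer the cruder bound $\liminf\geq\epsilon_{1-\varepsilon_0,1}-\delta$ obtained from a weaker law). The special case $\varepsilon_0=0$ is recovered verbatim upon replacing $[1-\varepsilon_0,1]$ by $\{1\}$ and $\epsilon_{1-\varepsilon_0,1}$ by $\epsilon_1=\nu_0(\{1\})$.
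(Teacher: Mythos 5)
Your argument is correct, and it reaches the conclusion by a genuinely different route from the paper. The paper proves the statement first for the i.i.d.\ reference measure $\mu_0=\prod_{n=1}^\infty\nu_0$: there the indicators $\hat{p}^{\varepsilon_0}_i$ are i.i.d.\ Bernoulli$(\epsilon_{1-\varepsilon_0,1})$, the classical SLLN gives $\frac1n\sum_{i=1}^n\hat{p}^{\varepsilon_0}_i\to\epsilon_{1-\varepsilon_0,1}$ $\mu_0$-a.s.\ (Egorov's theorem then yields the uniform-$N$ ``finite'' version with probability $1-\delta$ stated in the Remark), and the passage to a general centered $\mu$ is done wholesale through Lemma \ref{lem:Kakutani}: since $\mu\ll\mu_0$, any $\mu_0$-null set is $\mu$-null, so the a.s.\ statement transfers without ever touching the individual marginals $\nu_i$. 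You instead work directly under $\mu$, which forces you to handle non-identically distributed summands: you need both the convergence of the means $q_i=\nu_i([1-\varepsilon_0,1])\to\epsilon_{1-\varepsilon_0,1}$ and Kolmogorov's SLLN for independent bounded variables. The step you rightly flag as the obstacle --- extracting setwise convergence from $\sum_n d(\nu_n,\nu_0)<\infty$ for an arbitrary $d\in\cD$ --- does close, but not by treating $d$ case by case: Definition \ref{rem:def cD} says precisely that $1-H(\nu_n,\nu_0)\lesssim d(\nu_n,\nu_0)$, so $d(\nu_n,\nu_0)\to 0$ forces $1-H(\nu_n,\nu_0)\to 0$, and the standard Hellinger--total-variation inequality $\norm{\nu_n-\nu_0}\le 2\sqrt{2\left(1-H(\nu_n,\nu_0)\right)}$ then gives $|q_i-\epsilon_{1-\varepsilon_0,1}|\to 0$ uniformly over all admissible $d$. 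With that supplied, your proof is complete. What each approach buys: the paper's route is shorter because it reuses Kakutani's dichotomy already needed for Theorem \ref{th:unweig and indep}, and it delivers the quantitative Egorov version of the Remark for free; your route is self-contained at the level of this proposition (no appeal to the equivalence-of-product-measures machinery) and makes explicit the weaker fact that only summability --- indeed only $d(\nu_n,\nu_0)\to 0$ --- and not the centering condition \eqref{eq:centered} is used.
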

This proposition tells us that, almost surely, at least a proportion $\varepsilon>0$ of voters is well-informed or almost well-informed if $n$ is large enough. These voters will vote for the correct option with probability greater than $ 1-\varepsilon_0$ with $\varepsilon_0$ as small as we want or even zero. Nevertheless, the thesis of the MoA is not satisfied.
\begin{remark}
	We will actually prove a ``finite'' version of the statement above. In particular, for every $\delta>0$ $\existss N\in\bN$ such that
	\begin{equation*}
		\mu_0\left(\left|\{1\le i\le n~/~p_i\in[1-\varepsilon_0,1]\}\right|>\varepsilon n\right)>1-\delta~~\foralll n\ge N,
	\end{equation*}  
	where $\mu_0=\prod_{n=1}^{\infty}\nu_0$. This statement means that a proportion $\varepsilon>0$ of voters is well-informed or almost well-informed will be reached if the population $n$ is greater than a (finite and uniform) $N$ with probability as close to one as we want.
	 To compare, the statement above is equivalent to:
	\begin{equation*}
		\mu\left(\lim_{n\to\infty}n^{-1}|\{1\le i\le n~/~p_i\in[1-\varepsilon_0,1]\}|>\varepsilon\right)=1.
	\end{equation*} 

\end{remark}

\subsection{On the election of $\mu$ and the prior probability}
To derive $\mu(\cC_I)=0$, the centered condition of Definition \ref{def:centered measure} can be relaxed somehow (although this condition is essential to calculate the a priori probability as we will see below). We could define in the same fashion as in Definition \ref{def:centered measure}:
\begin{definition}\label{def:biased measure}
	A probability measure $\mu=\prod_{n=1}^{\infty}\nu_n$ on $[0,1]^\bN$ will be $b$-biased for $b\in[-\frac12,\frac12]$ if there exists a probability measure on $[0,1]$, $\nu_0$, such that $\nu_n\ll\nu_0~\foralll n\ge 1$ (see Section \textnormal{\ref{sec:notation}} for notation), 
	\begin{equation}
		\int_{[0,1]} x~ d\nu_0 (x)=\frac12+b
	\end{equation}
	and
	\begin{equation}\label{eq:dist cond}
		\sum_{n=1}^\infty d(\nu_n,\nu_0)<\infty.
	\end{equation}	
	with $d\in\cD$,
\end{definition}
\begin{example}
	This example is a generalization of the Example \ref{ex:cent}. 	For instance, consider a ``biased'' measure
	\begin{equation}\label{eq:biased mu}
		\mu_{b_0}=\prod_{n=1}^{\infty}\lambda_{b_0},
	\end{equation}
	where the Radon-Nikodym derivative  (this is, its probability density function $\rho_{b_0}$) is given by
	$$
	\frac{d\lambda_{b_0}}{d\lambda}(x)=\rho_{b_0}(x)=(1-b_0/2)+b_0 x,
	$$
	with $b_0\in[-2,2]$, i.e., we modify the standard Lebesgue measure ($\lambda=\lambda_0$) such that its density is affine and more concentrated on $(0,1/2)$ if $b_0\in[-2,0)$ and more concentrated on $(1/2,1)$ if $b_0\in(0,2]$. It is straightforward to check that in \eqref{eq:biased mu}, $ b=b_0/12$. The case of Example \ref{ex:cent} is recovered when $b_0=0$.
\end{example}
It seems natural that the larger the positive (resp. negative) bias, the larger (resp. smaller) $\mu(\cC_I)$ will be. This happens because we are initially assigning less (resp. more) measure to the event $\{p<1/2\}$, i.e., to the event that the individual voter is more likely to choose wrongly. Therefore, Theorem \ref{th:unweig and indep}, as there is no bias ($b=0$), implies that for any measure $\mu=\prod_{n=1}^\infty \nu_n$ where the $\nu_n$ assign probability to both sides $\{p<1/2\}$ and $\{p>1/2\}$ ``fairly''\footnote{In the sense that $\int_{[0,1/2)}xd\nu_0(x)+\frac12\nu_0\left(\frac12\right)+\int_{(1/2,1]}xd\nu_0(x)=\frac12$. If all the mass were concentrated on $[0,1/2)$, $\nu_0[0,1/2)=1$, then previous sum would be $<1/2$ and the opposite for all the mass concentrated on $(1/2,1]$. }, then $\mu$ is going to assign measure zero to the CJP, i.e., the CJP will not hold almost surely. Hence, we get the same result as if $b<0$, see Example \ref{ex:bias measures th}. Sometimes, $b<0$ could be justified (e.g., \cite{Cap11}), but here we show that even if we assume $b=0$ because, following a Bayesian approach, we want to estimate the \textit{prior probability} (the probability before any evidence is collected) of the CJP, we will arrive at the same result: the CJP fails almost surely. That is, if we try to measure the applicability of the CJP according to a symmetrically balanced distribution (in particular, with no bias toward incompetence) without considering any evidence on voters competence, we arrive at the result that the CJT does not hold almost surely. Prior (or a priori in this case) probabilities are the baseline from which probabilities are updated when evidence is collected. So, in this setting, we would need strong evidence of voter competence to expect that the CJT can be applied. 

Nevertheless, the case $b<0$ has an important difference with respect to $b=0$. Now we can prove that, almost surely, the anti-CJP will hold, i.e.,
\begin{equation}\label{eq:antiCJT cond}
	\lim_{n\to\infty}\bP\left(\sum_{i=1}^n X_i<\frac{n}{2}\right)=1,
\end{equation}
\textit{the wrong option will be chosen almost surely}. Indeed, let $\Phi(x)=1-x$, $X'\coloneqq\Phi\circ X$ and $\nu_n'\coloneqq \Phi_*\nu_n$, the push-forward measure, as $\bP\left(X'=1\right)=:p'=\Phi(p)$ measures the probability of choosing the wrong option. Hence, for $\mu'=\prod_{n=1}^{\infty}\nu_n'$ we have
$$\int_{[0,1]}\lambda d\nu_0'(\lambda)=\int_{[0,1]}(1-\lambda) d\nu_0(\lambda)=\frac12-b=:1+b'>\frac12.$$ 
Thus, we can use Theorem \ref{th:unweig, indep and pos bias} to conclude the proof.

Therefore, the CJT is a \textit{double-edged sword}: it can either prove that majority rule is an almost perfect mechanism or an almost perfect disaster. This is partly the reason why we investigate here its applicability, to ensure that we are not in the case of a perfect disaster, but of a perfect mechanism to aggregate information.

As above, we should not confuse the Bayesian analysis in the two probability spaces. In the standard space, strategic voting and Bayesian--Nash equilibria were first analyzed by, among others, \cite{ASB96,McL98}. Our Bayesian approach is in the meta-probability level. We want to answer whether, given a dichotomous choice and a set of voters or jurors, we can invoke the CJT to ensure they will reach the correct option as the number of members increases. More precisely, we want to know its prior probability. These two problems are completely different.

The measures considered in Definition \ref{def:centered measure} constitute a quite general set of measures satisfying this symmetry condition of not favoring incompetence. Nevertheless, we can extend the results of Theorem \ref{th:unweig and indep} to a larger set of measures. This is treated in Theorem \ref{th:unweig and indep and biased} and Theorem \ref{th:extension}. As we have said, these more technical theorems extend Theorem \ref{th:unweig and indep} to some new measures, in particular, including the ones with $b\le 0$. 
\subsection{The case of $b>0$.}
Can $b>0$ be justified in some cases? As we commented in Section \ref{sec:intro}, we can achieve $p_i\ge 1/2$ for all $i\in\bN$ if the original voters with $p_i<1/2$ are assigned a weight of $-1$. In that case,
\begin{equation*}
	\bP\left((-1)X_i=1\right)=\bP\left(X_i=-1\right)=1-p_i>1/2,
\end{equation*}
where for simplicity we have assumed (see Section \ref{sec:intro}) that $X_i\in\{-1,1\}$. As we commented in the introduction, this does not seem easy to implement because voters can reject negative weights. Nevertheless, one could think that a rational voter will self-impose this if this voter knows that $p_i<1/2$. In other words, if the voter thinks the correct option is $A$ ($X_i=1$), then he/she votes $B$ ($X_i=-1$) and similarly for the opposite case. Now the probability is $p_i'=1-p_i>1/2$. But this strategy requires two steps:
\begin{itemize}
	\item knowing that $p_i<1/2$,
	\item be willing to reverse the outcome of one's vote. 
\end{itemize}
Considering real voters (not ideal ones), it is difficult to imagine the fulfillment of these steps. First, it is an empirical fact how well people calibrate their degree of knowledge with probability estimates. The standard finding of knowledge calibration experiments is overconfidence, people tend to overestimate their probability of being right, see Chapter 8 of \cite{StaRQ16} and references therein. And even if voters acknowledge that they are worse than a coin toss, it does not seem realistic to expect that, in general, they will reverse their outcome. For instance, they can rationalize their vote by introducing non-epistemic factors.

Be that as it may, our techniques can give us the result in this situation and this is the content of the following proposition:
\begin{theorem}\label{th:unweig, indep and pos bias}  Almost surely independent Condorcet Jury Theorem holds for a biased measure $\mu$ with $b>0$, that is
	\begin{equation}
		\mu(\mathcal{C}_I)=1.
	\end{equation}
\end{theorem}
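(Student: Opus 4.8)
The plan is to reduce the almost-sure statement to a \emph{deterministic} sufficient condition for the CJP and then verify that this condition holds $\mu$-almost surely by a strong law of large numbers at the meta-probability level. The key observation is that, under a product measure $\mu=\prod_{n=1}^\infty\nu_n$, the coordinates $p_1,p_2,\dots$ are \emph{independent} $[0,1]$-valued random variables with $p_i\sim\nu_i$, so the quantity to control is the empirical mean $\frac1n\sum_{i=1}^n p_i$; the role of $b>0$ is to force this mean to exceed $\tfrac12$ in the limit.

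First I would record the deterministic implication already carried out in \eqref{eq:ex1}: for a \emph{fixed} sequence $(p_i)$, writing $\bar X_n=\frac1n\sum_{i=1}^n X_i$ with $X_i\sim\mathrm{Bernoulli}(p_i)$ independent, Chebyshev's inequality gives
\begin{align*}
  \bP\left(\bar X_n\le \tfrac12\right)
  &\le\frac{\mathrm{Var}(\bar X_n)}{\left(\bE[\bar X_n]-\tfrac12\right)^2}
  =\frac{\frac1{n^2}\sum_{i=1}^n p_iq_i}{\left(\frac1n\sum_{i=1}^n (p_i-\tfrac12)\right)^2}\\
  &\le\frac{1/(4n)}{\left(\frac1n\sum_{i=1}^n (p_i-\tfrac12)\right)^2},
\end{align*}
using $p_iq_i\le\tfrac14$. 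Consequently the deterministic set
\begin{equation*}
  A\coloneqq\Big\{(p_i)\in[0,1]^\bN~:~\tfrac1n\textstyle\sum_{i=1}^n p_i\to \tfrac12+b\Big\}
\end{equation*}
is contained in $\cC_I$: on $A$ the denominator above converges to $b^2>0$ while the numerator is $O(1/n)$, so $\bP(\bar X_n\le\tfrac12)\to0$, i.e. $\bP(\bar X_n>\tfrac12)\to1$, which is exactly \eqref{eq:CJT cond}.

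It then remains to prove $\mu(A)=1$. Since the series \eqref{eq:dist cond} converges we have $d(\nu_n,\nu_0)\to0$, and $d$ controls the first moment (for the total variation distance directly, and for the Kullback--Leibler divergence through Pinsker's inequality, the map $x\mapsto x$ being bounded on $[0,1]$; the general class $\cD$ is handled by Definition \ref{rem:def cD}). Hence the means satisfy $\int_{[0,1]}x\,d\nu_n(x)\to\int_{[0,1]}x\,d\nu_0(x)=\tfrac12+b$, and by Cesàro summation $\frac1n\sum_{i=1}^n\bE_\mu[p_i]\to\tfrac12+b$. Because the $p_i$ are $\mu$-independent and bounded, $\mathrm{Var}_\mu(p_i)\le\tfrac14$, so Kolmogorov's condition $\sum_i i^{-2}\,\mathrm{Var}_\mu(p_i)<\infty$ holds and the strong law of large numbers for non-identically-distributed independent summands yields $\frac1n\sum_{i=1}^n\big(p_i-\bE_\mu[p_i]\big)\to0$ $\mu$-a.s. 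Combining the two limits gives $\frac1n\sum_{i=1}^n p_i\to\tfrac12+b$ $\mu$-a.s., that is $\mu(A)=1$, whence $\mu(\cC_I)\ge\mu(A)=1$.

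I expect no serious obstacle: the mechanism is simply that a strictly positive bias $b>0$ produces, almost surely, a \emph{linear} drift $\sum_{i=1}^n(p_i-\tfrac12)\sim bn$, so the signal-to-noise ratio $\sim b\sqrt n\to\infty$ overwhelms the $\sqrt n$ fluctuations. The only points needing care are (i) deducing convergence of the first moments from $d$-convergence uniformly over the admissible class $\cD$, which is where the precise Definition \ref{rem:def cD} enters, and (ii) checking the hypotheses of the SLLN, which is immediate from boundedness. This is the genuinely easy counterpart of Theorem \ref{th:unweig and indep}: for $b=0$ the drift is only $o(n)$ and one must work to show the ``large-drift'' set is $\mu$-null, whereas for $b>0$ the linear drift forces the conclusion.
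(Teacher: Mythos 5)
Your proof is correct, and it reaches the conclusion by a route that differs from the paper's in one substantive respect. The paper's proof (sketched after the theorem statement, with details in Appendix \ref{ap:proof unweig}) works first under the i.i.d.\ reference measure $\mu_0=\prod_{n}\nu_0$: there the ordinary SLLN gives $\frac1n\sum_i(p_i-\tfrac12)\to b>0$ and $\frac1n\sum_i p_iq_i\to m^1-m^2>0$ $\mu_0$-a.s., so $Q_n\to\infty$ and condition \eqref{eq:first cond} of the Berend--Paroush characterization holds $\mu_0$-a.s.; the statement is then transferred to $\mu$ via the Kakutani absolute-continuity Lemma \ref{lem:Kakutani}, which is where the hypothesis \eqref{eq:dist cond} enters. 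You instead work directly under $\mu$: you extract from \eqref{eq:dist cond} only the weaker consequence $d(\nu_n,\nu_0)\to0$, convert it into convergence of first moments (for general $d\in\cD$ this needs the chain ``$1-H$ small $\Rightarrow$ Hellinger small $\Rightarrow$ total variation small'', via $\norm{\nu_n-\nu_0}\le 2\,d_H(\nu_n,\nu_0)$; you gesture at this but it deserves one line), and then invoke Kolmogorov's SLLN for independent, non-identically distributed bounded summands to get $\frac1n\sum_i p_i\to\tfrac12+b$ $\mu$-a.s. Your deterministic step via Chebyshev is also self-contained: since only the inclusion $A\subset\cC_I$ (a sufficient condition for the CJP) is needed, you avoid citing the full characterization of \cite{BP98}. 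What each approach buys: the paper's Kakutani route reuses machinery already needed for Theorem \ref{th:unweig and indep} and transfers \emph{any} $\mu_0$-a.s.\ statement at once; your route is more elementary and self-contained for this particular theorem, at the price of the non-i.i.d.\ SLLN and the moment-convergence step. Both are valid; no gap.
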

\begin{proof}
	The proof of this theorem parallels the proof of Theorem \ref{th:unweig and indep} in Appendix \ref{ap:proof unweig}. The main change is that the denominator of $Q_n$ is:
	\begin{equation*}
		\sqrt{n}\times\frac{1}{n}\sum_{i=1}^n(p_i-1/2)
	\end{equation*}
	and the second factor tends to $b>0$ as $n\to\infty$ by the SLLN. Similarly, $m^1-m^2>0$ unless $\nu_0=\delta_1$, but in that case the theorem is trivial. Kakutani's lemma is applied in the same way.

\end{proof}

\section{The weighted CJT holds almost surely in almost every situation}\label{sec:weighted CJT}
But even in the case $b\le0$, not everything is lost. We can try to modify the aggregation procedures to achieve a competent mechanism. The natural idea is the consideration of weighted majority rules as described in \eqref{eq:weighted maj}.
By hypothesis, $\bP(X_i=1)=p_i$ and $\{X_i\}_{i=1}^\infty$ are independent. With weighted majority rule we have the following version of the CJT (sufficient conditions).
\begin{proposition}\label{prop:weighted CJT} If either
	\begin{equation}\label{eq:first cond weight}
		\frac{\sum_{i=1}^n w_i(p_i-q_i)}{\sqrt{\sum_{i=1}^n w_i^2p_iq_i}}\to \infty
	\end{equation}
	or, for any $n$ large enough,
	\begin{equation*}
		\sum_{i=1}^n w_i \delta_{p_i 1}>\sum_{i=1}^n w_i (1-\delta_{p_i 1})
	\end{equation*} 
	where $\delta_{ij}$ is the Kronecker delta, then the Condorcet Jury Property holds for the weights $w$. 
\end{proposition}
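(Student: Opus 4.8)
The plan is to reduce the weighted Condorcet Jury Property to a one-sided concentration estimate for the weighted tally $X^w_n = \sum_{i=1}^n w_i X_i$, exactly as in Example 1 but now in the symmetric encoding $X_i\in\{-1,1\}$. First I would record the first two moments. Since $\bP(X_i=1)=p_i$, $\bP(X_i=-1)=q_i$, one has $\bE[X_i]=p_i-q_i$, and because $X_i^2=1$ always, $\var(X_i)=1-(p_i-q_i)^2=4p_iq_i$. By independence this gives
\[
\bE[X^w_n]=\sum_{i=1}^n w_i(p_i-q_i),\qquad \var(X^w_n)=4\sum_{i=1}^n w_i^2 p_i q_i.
\]
The weighted CJP is the statement $\bP(X^w_n>0)\to 1$, so everything comes down to bounding $\bP(X^w_n\leq 0)$.

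For the first condition I would run Chebyshev's inequality as in \eqref{eq:ex1}. Note first that hypothesis \eqref{eq:first cond weight} forces $\bE[X^w_n]>0$ for all large $n$, since the ratio tends to $+\infty$ while its denominator is nonnegative. Assuming $\var(X^w_n)>0$ so that Chebyshev applies (if the variance vanishes then $X^w_n$ equals its mean surely, which is positive, and the CJP is immediate), I would write
\[
\bP(X^w_n\leq 0)\leq \bP\big(|X^w_n-\bE[X^w_n]|\geq \bE[X^w_n]\big)\leq \frac{\var(X^w_n)}{(\bE[X^w_n])^2}=4\left(\frac{\sum_{i=1}^n w_i(p_i-q_i)}{\sqrt{\sum_{i=1}^n w_i^2 p_i q_i}}\right)^{-2}.
\]
The bracketed factor is precisely the ratio in \eqref{eq:first cond weight}, so under that hypothesis the right-hand side tends to $0$ and $\bP(X^w_n>0)\to 1$, which is the CJP.

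For the second condition I would argue deterministically, using that the weights are positive (bounded below by $1$ in the regime of interest). Splitting the tally into the voters who are certainly correct ($p_i=1$, hence $X_i=1$ surely) and the rest, and using $w_iX_i\geq -w_i$ for every voter with $p_i\neq 1$, the worst-case bound is
\[
X^w_n=\sum_{i=1}^n w_i\,\delta_{p_i 1}+\sum_{i=1}^n w_i(1-\delta_{p_i 1})X_i\geq \sum_{i=1}^n w_i\,\delta_{p_i 1}-\sum_{i=1}^n w_i(1-\delta_{p_i 1}),
\]
which is strictly positive for all large $n$ by the second hypothesis; hence $\bP(X^w_n>0)=1$ for such $n$, giving the CJP trivially. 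The only delicate point in the whole argument is the bookkeeping of the symmetric encoding, in particular the factor $4$ in $\var(X_i)=4p_iq_i$ and the need to verify $\bE[X^w_n]>0$ before invoking Chebyshev; once these are in place both implications are routine, and I expect no genuine obstacle beyond this.
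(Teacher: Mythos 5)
Your proposal is correct and follows essentially the same route as the paper: the same moment computations $\bE[X_i]=p_i-q_i$, $\var(X_i)=4p_iq_i$, the same Chebyshev bound $\bP(X^w_n\le 0)\le \var(X^w_n)/\bE[X^w_n]^2$ for the first condition, and the same deterministic worst-case lower bound on $X^w_n$ for the second. The extra care you take (checking $\bE[X^w_n]>0$ before applying Chebyshev, and disposing of the degenerate zero-variance case) matches or slightly sharpens the paper's own presentation.
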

\begin{proof}
	As we have that
	\begin{equation*}
		\bE\left(X_i\right)=2p_i-1=p_i-q_i,~~\text{Var}\left(X_i\right)=4p_iq_i,
	\end{equation*}
	where $q_i\coloneqq 1-p_i$, then
	\begin{align*}
		\bP\left(X_n^w\le 0\right)&=\bP\left(X_n^w-\bE\left(X_n^w\right)\le -\bE\left(X_n^w\right)\right)\le \bP\left(|X_n^w-\bE\left(X_n^w\right)|\ge \bE\left(X_n^w\right)\right)\le\\
		&\le \frac{\text{Var}(X^w_n)}{ \bE\left(X_n^w\right)^2}=\frac{4\sum_{i=1}^n w_i^2p_iq_i}{\left(\sum_{i=1}^n w_i(p_i-q_i)\right)^2}\to 0
	\end{align*}
	as $n\to\infty$. We have used Chebyshev's inequality, which can be applied because $\bE\left(X_n^w\right)>0$ if $n$ is large enough by \eqref{eq:first cond weight}. For the second condition,
	\begin{equation*}
		X_n^w\ge \sum_{i=1}^n \left(w_i \delta_{p_i 1}-w_i(1-\delta_{p_i 1})\right)>0~~\text{a.s.}
	\end{equation*}
	by hypothesis if $n$ is large enough.
\end{proof}
\begin{remark}
	These conditions are the generalizations of \eqref{eq:first cond} and \eqref{eq:second cond}, see also \eqref{eq:ex1}, \eqref{eq:ex2} for the idea. If we had that, for instance, $w_i\ge 1$, it can be checked that the proof given in \cite{BP98} applies to our case and these conditions are necessary too. $w_i\ge 1$ corresponds to the case where no voter loses, formally, its weight on the election. 
\end{remark}

The next step would be to obtain, for some positive integer $k$ and constants $\alpha,\beta>0$,
\begin{equation}\label{eq:weight corr}
	w=\alpha+\beta p^k+\varepsilon,
\end{equation}
i.e., competence is positively correlated with the weight we assign but the association is not perfect, there is a stochastic error $\varepsilon$. More generally, \begin{equation*}
	w=\alpha+\sum_{i=1}^L\beta_i p^i+\varepsilon,
\end{equation*}
i.e., the weight is correlated with the probability of ``being right''(we assume the polynomial of $p$ is increasing), but there is a random error $\varepsilon$. This error can be interpreted as a measurement error, we cannot expect to obtain a perfect correlation. For simplicity we can assume \eqref{eq:weight corr} for some positive $k\in\mathbb{N}$. We also assume that $w\in [1,W]$ for some $W\ge1$. Thus, we choose $\alpha=1$ and $\beta=W-1$. That is, 
\begin{equation*}
	w=w_d(p)+\varepsilon
\end{equation*}
where $w_d$ would be the deterministic weight for a given probability $p$ going from $1$ to $W$ as a polynomial function. But there will be errors in the assignment of the weights and this is captured by $\varepsilon$.

In Theorem \ref{thm:measure weighted CJT} we show that if \eqref{eq:weight corr} is good enough, the CJT will hold almost surely for ``almost'' every measure $\mu$, even if they are strongly biased toward $p=0$, i.e., we are not only considering centered measures but the less favorable case of measures representing voters far from competence. In other words, we are not estimating the prior probability but the probability given almost any evidence on voters competence. This gives some evidence for trying to include epistemic weights in the decision procedure if we are interested in choosing the correct option. Our only requirement will be much weaker; basically $\nu_0\left((\frac12,1]\right)>0$. Otherwise ($p>\frac12$ does not happen almost surely), we cannot expect the CJP because in the best situation weights would reduce it to the case $\nu_0=\delta_{1/2}$ where we know that the CJP fails. As we see, this requirement is much weaker than imposing that the measure $\nu$ is centered, i.e., we allow the situation $m<1/2$ or $b<0$.

But even though the distribution might be biased toward the wrong option, we will prove that the CJT will hold almost surely if the weights are properly chosen. This is the content of the next theorem. We define $\cC^w_I$ as the set of sequences of probabilities $\{p_n\}_{n=1}^\infty$ such that for the weighted majority rule \eqref{eq:weighted maj} according to \eqref{eq:weight corr}, the CJP holds (note that the social choice function is not fixed as it depends on the weights). Also, $\tilde{\nu}_0$ is now a measure on $(p,\varepsilon)$ but $\varepsilon$ is not independent of $p$ (see \eqref{eq:def eps}), i.e., it is not a product measure. Similarly, $\mu$ will be absolutely continuous (see Section \ref{sec:notation}) w.r.t. $\mu_0=\prod_{n=1}^{\infty}\tilde{\nu}_0$. With this setting:
\begin{theorem}\label{thm:measure weighted CJT}
	Let $\{\varepsilon_n\}_{n=1}^\infty$ be a set of random variables distributed according to:
	\begin{equation}\label{eq:def eps}
		\varepsilon|_p\sim \cN_a^b(0,\sigma_W^2)
	\end{equation}
	where $\cN_a^b$ is the truncated Gaussian distribution restricted to the interval $(a,b)$ where $a\equiv a(p,W)\coloneqq-(W-1)p$, $b\equiv b(p,W)\coloneqq (W-1)(1-p)$. Let us assume that $\tilde{\nu}_0$ satisfies $\tilde{\nu}_0\left(\{p\in(\frac12,1]\}\right)>0$. Then, there is a $k$ such that if $(W-1)/\sigma_W, W$ are large enough, then
	$$
	\mu(\cC^w_I)=1,
	$$
	i.e., the CJP holds almost surely with this weighted majority rule.
\end{theorem}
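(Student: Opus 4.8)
The plan is to verify the first sufficient condition \eqref{eq:first cond weight} of Proposition \ref{prop:weighted CJT} for $\mu_0$-almost every sequence, where $\mu_0 = \prod_{n=1}^\infty \tilde{\nu}_0$, and then transfer the conclusion to $\mu$. Since $\mu \ll \mu_0$, any event of full $\mu_0$-measure has full $\mu$-measure, so it suffices to work under the product measure $\mu_0$, where the pairs $(p_i,\varepsilon_i)$ are i.i.d. with law $\tilde{\nu}_0$ and the weights $w_i = 1 + (W-1)p_i^k + \varepsilon_i$ are bounded i.i.d. functions of these pairs. If I show that the first condition holds $\mu_0$-a.s., then that full-measure event is contained in $\cC^w_I$, giving $\mu_0(\cC^w_I)=1$ and hence $\mu(\cC^w_I)=1$.

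First I would apply the SLLN to the numerator and denominator of \eqref{eq:first cond weight} separately. Writing $p-q=2p-1$ and using $1\le w\le W$, $pq\le 1/4$, boundedness yields
\begin{equation*}
	\tfrac1n\sum_{i=1}^n w_i(p_i-q_i)\to A:=\bE_{\tilde{\nu}_0}[w(2p-1)],\qquad \tfrac1n\sum_{i=1}^n w_i^2 p_i q_i\to B:=\bE_{\tilde{\nu}_0}[w^2 pq]
\end{equation*}
$\mu_0$-a.s., with $0<B\le W^2/4<\infty$ in the non-degenerate case $\tilde{\nu}_0(\{p\in(0,1)\})>0$. Then the left side of \eqref{eq:first cond weight} equals $\sqrt{n}\,\frac{A}{\sqrt B}(1+o(1))\to\infty$ as soon as $A>0$, so the whole problem collapses to the single inequality $A>0$ for a suitable $k$ and for $(W-1)/\sigma_W,\,W$ large. (In the degenerate case $\tilde{\nu}_0(\{p\in\{0,1\}\})=1$ one instead invokes the second condition of Proposition \ref{prop:weighted CJT}, applying the same i.i.d./SLLN reasoning to the weighted counts of the $p_i=1$ and $p_i=0$ voters.)

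Conditioning on $p$ and writing $\bar\varepsilon(p):=\bE[\varepsilon\mid p]$ for the mean of the truncated Gaussian, I would decompose
\begin{equation*}
	A=\bE_p[2p-1]+(W-1)\,\bE_p\big[(2p-1)p^k\big]+\bE_p\big[(2p-1)\bar\varepsilon(p)\big]=2b+(W-1)c_k+\bE_p\big[(2p-1)\bar\varepsilon(p)\big].
\end{equation*}
The heart is the middle term. Using only $\tilde{\nu}_0(\{p>1/2\})>0$, there is $t_0>1/2$ with $c_0:=\tilde{\nu}_0(\{p\ge t_0\})>0$; splitting the integral at $1/2$ and at $t_0$ gives $\bE_p[(2p-1)p^k]\ge (2t_0-1)t_0^k c_0-(1/2)^k$, and since $2t_0>1$ the ratio of the two terms is $(2t_0-1)c_0(2t_0)^k\to\infty$. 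I therefore fix a $k$ with $c_k:=\bE_p[(2p-1)p^k]>0$, a positive constant depending only on $\tilde{\nu}_0$.

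The main obstacle is the truncation-bias term, which has the wrong sign: the window $(a,b)=(-(W-1)p,\,(W-1)(1-p))$ is asymmetric about $0$ for $p\ne 1/2$, giving $\bar\varepsilon(p)>0$ for $p<1/2$ and $\bar\varepsilon(p)<0$ for $p>1/2$, so $(2p-1)\bar\varepsilon(p)\le 0$. What rescues the argument is that this window always straddles $0$ (since $a\le 0\le b$): the mean of a centered Gaussian truncated to an interval containing its mode obeys $|\bar\varepsilon(p)|\le\sigma_W\sqrt{2/\pi}$ uniformly in $p$, so the bias term is only $O(\sigma_W)$, whereas the signal $(W-1)c_k=O(R\sigma_W)$ with $R:=(W-1)/\sigma_W$. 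Hence
\begin{equation*}
	A\ge 2b+(W-1)c_k-\sigma_W\sqrt{2/\pi}\ge -1+(W-1)c_k\Big(1-\tfrac{\sqrt{2/\pi}}{R\,c_k}\Big),
\end{equation*}
and choosing $R$ large enough that the bracket exceeds $1/2$, then $W$ large enough that $(W-1)c_k/2>1$, forces $A>0$. This is precisely the regime ``$(W-1)/\sigma_W$ and $W$ large'' of the statement. The delicate step on which I would spend the most care is the uniform $O(\sigma_W)$ bound on $\bar\varepsilon(p)$ across all $p\in[0,1]$, including $p$ near $0$ and $1$ where the window degenerates to a half-line, since it is exactly this bound that makes the signal-to-bias ratio grow like $R$.
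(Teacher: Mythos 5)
Your proposal is correct and follows essentially the same route as the paper's proof: verify the Chebyshev-type condition of Proposition \ref{prop:weighted CJT} via the SLLN under $\mu_0$, choose $k$ so that $\bE\left[(2p-1)p^k\right]>0$ using $\tilde{\nu}_0\left(\{p\in(\tfrac12,1]\}\right)>0$, show the truncation bias is negligible in the regime $(W-1)/\sigma_W\to\infty$ with $W$ large, and transfer to $\mu$ by absolute continuity. The only substantive variation is that you control the bias term by the uniform bound $|\bE[\varepsilon\mid p]|\le\sigma_W\sqrt{2/\pi}$ (valid because the truncation window always contains $0$), whereas the paper shows $\bE\left[(2p-1)\bE[\varepsilon\mid p]\right]=(W-1)\,o(1)$ by dominated convergence; both are sound, and your explicit treatment of the degenerate case $\tilde{\nu}_0(\{p\in\{0,1\}\})=1$ via the second condition of Proposition \ref{prop:weighted CJT} is if anything slightly more careful than the paper's.
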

The idea behind the theorem is clear: if we can find a procedure, with a suitable error, to assign weights according to competence, the (weighted) CJT will hold almost surely. Note that now we are considering the posterior probability too, as the measure is not centered any more. But, as we said, the main ingredient is the correct assignment of weights. In Appendix \ref{ap:psy and phil} we take this question seriously because there is little point in theorizing about something which cannot be practically implemented. We also consider how ``fair'' this situation would be.

\section{Concluding remarks}\label{sec:conclusion}
We have shown that the asymptotic CJP or the CJT for independent voters (which includes the MoA and the case studied by Condorcet) are, a priori, highly unlikely (see Theorem \ref{th:unweig and indep}, \ref{th:unweig and indep and biased}) unless we add some  good enough epistemic weights, i.e., weights correlated with epistemic rationality. That is, if we choose an arbitrary  sequence of voters, it will not satisfy the CJP almost surely. The bottom line is that applying the CJT (as it is common in some debates) might not be adequate if, using the Bayesian approach, there is no particular evidence of voter competence to update our priors (it might be the opposite case, \cite{Cap11}) nor some weights to correct the lack of competence. If ``good'' epistemic weights are added, its probability goes to one by Theorem \ref{thm:measure weighted CJT}. Note that in this latter case we are not estimating the prior probability but the probability given almost any evidence on voters competence (including the less favorable situations). These weights must be correlated (not necessarily a perfect correlation) with epistemic rationality and they guarantee a minimal weight of one to every voter. The CJT is an important and useful result to improve the decision-making process, but we have to ensure it holds when it is supposed to hold.

Obviously, our framework is a toy model of the real world, but a good point to start and, in fact, it is the same model that is usually used when the CJT is invoked. Some complications can be added and could be the topic of future research. For instance, in some processes we do not expect the options to remain unchanged if competent voters are more influential, but this is not directly captured in a dichotomous choice. An important limitation of the framework is the independence assumption. Votes can be correlated because of a deliberation process (``contagion'' in general), common sources of information or strategic voting, see \cite{Piv17} and references therein. Some works have treated the CJT for dependent voters, see for instance \cite{PZ12,Piv17}. In this case the known necessary and sufficient conditions involve the covariance between votes, say $\rho_{ij}$. Thus, the measure $\mu$ should include these parameters, but it cannot be a product measure as above. Indeed, if, for instance, $X_i\in\{0,1\}$, then
\begin{equation*}
p_{ij}\coloneqq \bE\left(X_i X_j\right)=\left|\bE\left(X_i X_j\right)\right|\le \sqrt{p_i p_j}
\end{equation*}
by H\"older's inequality. Also, as $p_{ij}=\bP\left(\{X_i=1\}\cap\{X_j=1\}\right)\le p_i,p_j$ and $\rho_{ij}=p_{ij}-p_ip_j$. So, $p_{ij}$ or $\rho_{ij}$ cannot be taken independently of $p_{i},p_j$ in $\mu$. A careful analysis would be needed in this situation because how to choose the measure is not trivial. Furthermore, sufficient and necessary conditions are less understood, so more analysis in that direction would be needed too. Anyway, this setting is somehow more restrictive as we not only need some competence condition, but additional requirements must be added, see \cite[Theorem 5.3]{Piv17}. For instance, the ``average correlation'' cannot growth too much: if votes are highly correlated there is little point in increasing the number of voters as they will vote in the same direction. So in this case, we would have to worry not only about competence but also about the correlation between votes.
\section{Acknowledgments}
The author would like to thank C. Romaniega for valuable comments which help to improve the manuscript.  The author is supported by the grant  MTM2016-76072-P (MICINN). This
work is supported in part by the ICMAT–Severo Ochoa grant CEX2019-000904-S. The author is
also a postgraduate fellow of the City Council of Madrid at the Residencia de Estudiantes
(2020--2022).
\newpage
\appendix
\section{Proof of Theorem \ref{th:unweig and indep} and Proposition \ref{prop:MoA cond}}\label{ap:proof unweig}
Let us assume first that $\mu_0=\prod_{n=1}^{\infty}\nu_0$. By \cite[Theorem 2]{BP98}, the CJP fails iff both
\begin{equation}\label{eq:first cond}
	\lim_{n\to\infty}\frac{\sum_{i=1}^n p_i-\frac{n}{2}}{\sqrt{\sum_{i=1}^n{p_iq_i}}}=\infty
\end{equation}
where $q_i\coloneqq 1-p_i$, and $\exists~n_0\in\bN$ such that
\begin{equation}\label{eq:second cond}
	|\{i:1\le i\le n \text{ and } p_i=1\}|>n/2\quad\foralll n>n_0
\end{equation}
do not hold. See \eqref{eq:ex1}, \eqref{eq:ex2} for the idea. For the first condition, we define:
\begin{equation*}
	Q_n\coloneqq\frac{\sum_{i=1}^n p_i-\frac{n}{2}}{\sqrt{\sum_{i=1}^n p_iq_i}}=\frac{\frac{1}{\sqrt{n}}\sum_{i=1}^n \left(p_i-\frac{1}{2}\right)}{\sqrt{\frac{1}{n}\sum_{i=1}^n{p_iq_i}}}
\end{equation*}
with $\sum_{i=1}^n{p_iq_i}=\sum_{i=1}^n{\left(p_i-p_i^2\right)}$. The \textit{key} here is to realize that under the measure $\mu_0$, $p_i$ are i.i.d. random variables in $([0,1]^\bN,\mu_0)$. Thus, we can apply the Strong Law of Large Numbers (SLLN), \cite[Theorem 10.13]{Fol99},
\begin{equation*}
	\frac{1}{n}\sum_{i=1}^n{\left(p_i-p_i^2\right)}\to b+\frac{1}{2}-m^2=m^1-m^2,
\end{equation*}
$\mu_0$-almost surely. Now, note that
\begin{equation}\label{eq:no conc}
	m^1-m^2=\int_{[0,1]} (x-x^2)~ d\nu_0 (x)=\int_{(0,1)} (x-x^2)~ d\nu_0 (x)>0
\end{equation}
as long as $\nu_0((0,1))>0$. This is going to be the case if $\epsilon_1<1/2$, as
\begin{equation}
	\frac12=m^1=\epsilon_1 +\int_{(0,1)} x~ d\nu_0 (x)
\end{equation} 
so $\int_{(0,1)} x~ d\nu_0 (x)=1/2-\epsilon_1>0$. For the numerator, we need the other classical asymptotic result, the Central Limit Theorem (CLT), to conclude 
\begin{equation*}
	\frac{1}{\sqrt{n}}\sum_{i=1}^n\left( p_i-\frac{1}{2}\right)\to \cN(0,\sigma^2)
\end{equation*}
in distribution as $n\to\infty$. Thus, by Slutsky Theorem \cite[Theorem 1.11]{Sha03}, 
\begin{equation*}
	Q_n\to \cN(0,\sigma'^2)\quad \text{where }~ \sigma'\coloneqq \frac{\sigma}{\sqrt{m^1-m^2}}
\end{equation*}
in distribution as $n\to\infty$. Let $Q\sim \cN(0,\sigma'^2)$. So we can conclude that,
\begin{equation}\label{eq:prob Qn inf}
	\mu_0\left(\lim_{n\to\infty}Q_n=\infty\right)=0.
\end{equation}
Indeed, let us define the events
$$
\cA\coloneqq \{(p_i)_{i=1}^\infty\in[0,1]^\bN~/~Q_n((p_i)_{i=1}^\infty)\to\infty\},\quad 
\cA_{n,\varepsilon}\coloneqq \{(p_i)_{i=1}^\infty\in[0,1]^\bN~/~Q_n((p_i)_{i=1}^\infty)>M_\varepsilon\}
$$
where $M_\varepsilon$ satisfies $\mu_0\left(Q\le M_\varepsilon\right)=1-\varepsilon$. By the definition of limit, for every $\varepsilon>0$,
\begin{equation*}
	\cA\subset \bigcup_{N\in\bN}\bigcap_{n\ge N} \cA_{n,\varepsilon}.
\end{equation*}
By the continuity of measures, \cite[Theorem 1.8.c)]{Fol99}
\begin{equation*}
	\mu_0\left(\bigcup_{N\in\bN}\bigcap_{n\ge N} \cA_{n,\varepsilon}\right)=\lim_{N\to\infty}\mu_0\left(\bigcap_{n\ge N}\cA_{n,\varepsilon}\right)\le\lim_{N\to\infty}\mu_0\left(\cA_{N,\varepsilon}\right)= \varepsilon,
\end{equation*}
where the last equality follows from the convergence in distribution. Hence, by the monotonicity of measures,
\begin{equation*}
	\mu_0\left(\cA\right)\le \varepsilon ~~\foralll\varepsilon>0,
\end{equation*}
concluding the proof of \eqref{eq:prob Qn inf}. Thus, the first condition \eqref{eq:first cond} will not hold almost surely. Let us see the second one \eqref{eq:second cond}. For that purpose, let us define for $p\in[0,1]$:
\begin{equation*}
	\tilde{p}:=\begin{cases*}
		1 \text{ if } p=1\\
		0 \text{ if } p\in[0,1)
	\end{cases*}
\end{equation*}
Thus, if $\mu_0(p_i\in A)=\nu_0(A)$ for $A$ a Borel set, $\tilde{p}_i\sim Bernoulli(\epsilon_1)$ where we defined $\epsilon_1$ as $\nu_0(\{1\})$. Let us also define
\begin{equation}\label{eq:def Sn}
	S_n\coloneqq|\{i:1\le i\le n \text{ and } p_i=1\}|=\sum_{i=1}^n \tilde{p}_i
\end{equation} 
and 
\begin{align}\label{eq:B def}
	\mathcal{B}&\coloneqq\{(p_i)_{i=1}^\infty\in[0,1]^\bN~/~\existss n_0 : S_{n_0+2k}>n_0/2+k~\foralll k\ge 0\}\text{,  }\nonumber\\ \mathcal{B}_n&\coloneqq\{(p_i)_{i=1}^\infty\in[0,1]^\bN~/~ S_n>n/2\}.
\end{align} 
By the SLLN,
\begin{equation*}
	\mu_0\left(\lim_{n\to\infty}\frac{S_n}{n} =\epsilon_1<\frac12\right)=1.
\end{equation*}
But if $\lim_{n\to\infty}S_n/n=\epsilon_1<\frac12$, then
\begin{equation*}
	\frac{S_n}{n}<\frac12
\end{equation*}
if $n$ is large enough. Thus, $\mu_0(\cB)=0.$
Summing up, 
\begin{equation}\label{eq:mu0 cCI}
	\mu_0(\cC_I)=\mu_0(\cA\cup\cB)=0
\end{equation}
concluding the proof of the theorem for $\mu_0=\prod_{n=1}^{\infty}\nu_0$ if $\epsilon_1<\frac12$. \\

We consider now the case of $\epsilon_1=1/2$ and therefore $\nu_0=\frac12 (\delta_0+\delta_1)$. First note that condition \eqref{eq:first cond} does not hold because here either $p_i=0$ or $q_i=0$ almost surely, so $p_i=0$ or $q_i=0$ $\foralll i\in\bN$ almost surely, i.e., $Q_n$ is not well-defined almost surely. But in this deterministic (only $p=0$ or 1) situation, it is clear that CJP holds iff \eqref{eq:second cond} holds. So let us show that the former fails.
Define $\bar{p}\coloneqq1$ if $p=1$ and $\bar{p}\coloneqq-1$ otherwise. If $\bar{S}_{n_0}^n\coloneqq\sum_{i=n_0}^n \bar{p}_i$, then, $\bar{S}_{n_0}^n$ is a symmetric random walk in $n$ starting at zero. If we denote, for $k\in\mathbb{Z}$,
\begin{equation*}
	r_k\coloneqq \mu_0\left(\{(p_i)_{i=1}^\infty\text{ such that }\exists~n~/~ \mathcal{S}_n=k\}\right)
\end{equation*}
where $\mathcal{S}_n$ is a symmetric random walk starting at zero. It is standard that $r_0=1$, in fact, the probability of returning infinitely often to 0 equals 1. Then, we have the following difference equation:
\begin{equation*}
	r_k=\frac{1}{2}r_{k-1}+\frac{1}{2}r_{k+1},
\end{equation*}
i.e., if the first move is up, there are $k-1$ up movements left and, similarly, if the first movement is down, we would need $k+1$ movements up. The solution is $r_k=c\in[0,1]$ because $0\le r_k\le 1$. As $r_0=1$, we conclude $r_k=1$ $\foralll k\in\mathbb{Z}$.

Now, fix a $n_0\in\bN$. Then, if $\bar{S}_{n}\coloneqq \bar{S}_{1}^{n}$ and \eqref{eq:second cond} holds for that $n_0$, then
\begin{equation*}
	\bar{S}_{n_0}=i
\end{equation*}
for $i=1,...,n_0$ such that $\bar{S}_n>0$ for $n\ge n_0$ odd. But, by the discussion above, the probability that $\bar{S}_{n_0}^n<-i<0$ is one, so the probability that $S_n>n/2$ given a fixed $S_{n_0}\ge\frac{n_0+1}{2}$ is zero. Therefore, the negation of \eqref{eq:second cond} almost surely as $\mu_0\left(\cB\right)\le\sum_{n_0}\mu_0(\cap_{k\ge 0} \mathcal{B}_{n_0+2k})=0,$ where
$$
\mathcal{B}=\bigcup_{n_0\in\mathbb{O}  }\bigcap_{k\ge 0} \mathcal{B}_{n_0+2k}.
$$
We need the following technical lemma to conclude the proof for a general centered measure. 
\begin{lemma}\label{lem:Kakutani} $\mu=\prod_{n=1}^{\infty}\nu_n\ll\mu_0=\prod_{n=1}^{\infty}\nu_0$ provided \eqref{eq:dist cond} holds.	
\end{lemma}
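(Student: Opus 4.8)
The plan is to reduce the statement to the classical Kakutani dichotomy for infinite product measures and then verify its hypothesis using the assumed summability \eqref{eq:dist cond}. Since by hypothesis $\nu_n\ll\nu_0$ for every $n$, the Radon--Nikodym derivatives $f_n\coloneqq\frac{d\nu_n}{d\nu_0}$ exist and satisfy $\int_{[0,1]}f_n\,d\nu_0=1$. The quantity governing the dichotomy is the Hellinger affinity $\rho_n\coloneqq\int_{[0,1]}\sqrt{f_n}\,d\nu_0=\int_{[0,1]}\sqrt{d\nu_n\,d\nu_0}$, related to the squared Hellinger distance by $H^2(\nu_n,\nu_0)=1-\rho_n$. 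Kakutani's theorem asserts that $\mu\ll\mu_0$ as soon as $\prod_{n=1}^\infty\rho_n>0$, equivalently as soon as $\sum_{n=1}^\infty H^2(\nu_n,\nu_0)<\infty$; otherwise the two product measures are mutually singular. Thus the whole lemma reduces to the single estimate $\sum_{n=1}^\infty H^2(\nu_n,\nu_0)<\infty$.

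This estimate is where the choice of the class $\cD$ enters. The point is that every $d\in\cD$ dominates the squared Hellinger distance up to a constant: one has the elementary comparisons $H^2\le \mathrm{TV}$ for the total variation distance, and $H^2\le\tfrac12\,\mathrm{KL}$ for the Kullback--Leibler divergence (the latter, e.g., from monotonicity of R\'enyi divergences together with $1-e^{-x}\le x$), so in either case $H^2(\nu_n,\nu_0)\lesssim d(\nu_n,\nu_0)$. Taking this domination as (part of) the defining property of $\cD$ in Definition \ref{rem:def cD}, summability \eqref{eq:dist cond} immediately gives
\[
\sum_{n=1}^\infty H^2(\nu_n,\nu_0)\lesssim\sum_{n=1}^\infty d(\nu_n,\nu_0)<\infty,
\]
which is precisely the input required by Kakutani's criterion.

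To keep the argument self-contained one may replace the appeal to Kakutani's theorem by a short martingale computation, which also makes transparent why the one-sided absolute continuity $\nu_n\ll\nu_0$ (rather than equivalence) suffices to land in the absolutely continuous — not the singular — alternative. Set $M_N\coloneqq\prod_{n=1}^N f_n$; then $(M_N)$ is a nonnegative $\mu_0$-martingale with $\bE_{\mu_0}[M_N]=1$. A direct second-moment computation, using independence of the coordinates under $\mu_0$, gives
\[
\bE_{\mu_0}\big[(\sqrt{M_{N'}}-\sqrt{M_N})^2\big]=2\Big(1-\prod_{n=N+1}^{N'}\rho_n\Big),
\]
which tends to $0$ as $N,N'\to\infty$ precisely because $\prod_n\rho_n>0$; hence $\sqrt{M_N}$ is Cauchy, and therefore convergent, in $L^2(\mu_0)$. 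This forces $M_N\to M_\infty$ in $L^1(\mu_0)$, and identifying $\int_A M_N\,d\mu_0=\mu(A)$ on cylinder sets $A$ and passing to the limit yields $\mu(A)=\int_A M_\infty\,d\mu_0$ for all measurable $A$, i.e.\ $\mu\ll\mu_0$ with density $M_\infty$. The positivity $\prod_n\rho_n>0$ is exactly what prevents $M_\infty$ from collapsing to $0$ (the singular case). The only genuinely delicate point is thus the clean comparison between the abstract divergences $d\in\cD$ and the Hellinger distance; once that is in hand, both the cited and the self-contained routes are routine.
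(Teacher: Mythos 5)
Your proposal is correct and follows essentially the same route as the paper: reduce to Kakutani's dichotomy via the Hellinger affinity $\prod_n H(\nu_n,\nu_0)>0$, convert this to the summability of $1-H(\nu_n,\nu_0)$, and then use the domination $1-H(\nu_n,\nu_0)\lesssim d(\nu_n,\nu_0)$ for $d\in\cD$ (which is precisely the defining property in Definition \ref{rem:def cD}, verified for total variation and Kullback--Leibler exactly as you indicate) together with \eqref{eq:dist cond}. The only addition is your self-contained martingale proof of the Kakutani criterion via the $L^2$-Cauchy property of $\sqrt{M_N}$, which is a correct and standard replacement for the citation to \cite[Proposition 2.21]{dPZ14} but does not change the substance of the argument.
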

With this lemma the proof of Theorem \ref{th:unweig and indep} is concluded as $\mu(\cC_I)=0$ by \eqref{eq:mu0 cCI}.
Before proving the lemma we present briefly some tools that will be needed.
\subsection{Distances and divergences}\label{sec:dist and div}
First, some definitions. Consider a family $M$ of probability distributions or measures.
\begin{definition}[Divergence] Let $M$ be as above and suppose that we are given a (smooth) function $d( \cdot|| \cdot) : M \times M \to \mathbb{R}$ satisfying the following properties
	$\forall$ $p,q \in M$:
	\begin{itemize}
		\item[i)] $d(p||q)\ge 0$, 
		\item[ii)] and $d(p||q)=0$ iff $p =q$.
	\end{itemize}
	Then, $d$ is said to be a {divergence.}
	\label{def:Div}
\end{definition}
\begin{definition}[Distance] Let $M$ be as above and suppose that we are given a function $d( \cdot, \cdot) : M \times M \to \mathbb{R}$ satisfying the following properties
	$\forall$ $p,q,r \in M$:
	\begin{itemize}
		\item[i)](Positive definiteness) $d(p,q)\ge 0$, and $d(p,q)=0$ iff $p =q$,
		\item[ii)](Symmetry) $d(p,q)=d(q,p)$,
		\item[iii)](Triangle inequality) $d(p,r)\le d(p,q)+d(q,r)$.
	\end{itemize}
	Then, $d$ is said to be a {distance.}
	\label{def:distance}
\end{definition}
Sometimes we will also use the notation $d( \cdot, \cdot)$ for divergences too. As a divergence do not necessarily satisfies ii) and iii), it is usually called a pseudodistance. Let us explore some examples. Given two measures $\mu,\mu'$ defined on the measurable space ($X,\Sigma$), the \textit{total variation distance} will be denoted by:
\begin{equation*}
	\norm{\mu-\mu'}\coloneqq 2 \ 
	\sup _ {B \in {\Sigma} } \ 
	| \mu(B) -
	\mu'( B) |.
\end{equation*}
That is, the total variation distance is twice the ``maximum'' difference between the measure of the same set for $\mu$ and $\mu'$. That is, we have the useful bound
\begin{equation}
	| \mu(B) -	\mu'( B) |\le\frac12 \norm{\mu-\mu'}\quad\foralll B \in {\Sigma},
\end{equation}
so the smaller $\norm{\mu-\mu'}$, the smaller the discrepancy between $\mu(B)$ and $\mu'( B)$ for every measurable set. In fact, it can be shown this is a norm in the space of Radon signed measures, \cite[Proposition 7.16]{Fol99}.
It is a well-known identity that:
\begin{equation}\label{eq:L1 identity}
	2\sup _ {B \in {\Sigma} } \left|{\mu(A) - \mu'(A)}\right| = \norm{ \rho - \rho' }_{L_1(\tau)}\coloneqq \int_{X} |\rho(x)-\rho'(x)|d\tau(x),
\end{equation}
where $\rho\coloneqq d\mu/d\tau$ and $\rho'\coloneqq d\mu'/d\tau$ for some $\tau\gg \mu,\mu'$. For instance, $\tau\coloneqq\frac12\left(\mu+\mu'\right) $.
Also, the relative entropy or Kullback–Leibler divergence is defined as:
\begin{equation*}
	d_{KL}(\mu\|\mu') \coloneqq \int_{X} \log \frac{\rho(x)}{\rho'(x)}\,\rho(x)d\tau(x).
\end{equation*}
See \cite{RRT19} for more details and for a theoretical framework relating divergences and entropies and for more (geometrical) properties of divergences.
\begin{proof}[Proof of the technical Lemma \textnormal{\ref{lem:Kakutani}}]
	Let
	\begin{equation}\label{eq:def dH}
		d_H(\nu_n,\nu_0)\coloneqq\left( 2 \left(1 - H ( \nu_n, \nu_0)\right) \right)  ^ {1/2}=\left( \int_X \left ( \sqrt {
			\frac{d \nu_0 }{d \tau }
		} - \sqrt {
			\frac{d \nu_n }{d \tau }
		} \right )  ^ {2}  d \tau \right)  ^ {1/2}
	\end{equation}
	where
	$$ 
	H (\nu_0,\nu_n)  \coloneqq \ 
	\int\limits _ {  X }
	\sqrt {
		\frac{d{\nu_n} }{d \tau }
	}
	\sqrt {
		\frac{d {\nu_0 } }{d \tau }
	}  d \tau
	$$
	is the Hellinger integral with $\tau$ is a measure such that $\nu_0,\nu_n$ are absolutely continuous and $X=[0,1]$ here. By Kakutani Dichotomy Theorem, \cite[Proposition 2.21]{dPZ14}, if
	\begin{equation}\label{eq:pos prod}
		\prod_{n=1}^\infty H(\nu_0,\nu_n)>0,
	\end{equation}
	then $\mu\ll\mu_0$. 
	To prove \eqref{eq:pos prod} we need to know the following fact: for $0\le a_n<1$, $\prod _{{n=1}}^{{\infty }}(1-a_{n})$ converges to a positive number iff $-\sum_n \log(1-a_n)$ converges iff $\sum_n a_n$ converges, by the limit comparison test. Here $1-a_n=H(\nu_n,\nu_0)$. Indeed, $H(\nu_n,\nu_0)\le1$ by Cauchy-Schwarz's inequality and $H(\nu_n,\nu_0)>0$ as $\nu_n\ll\nu_0$ by hypothesis (so take $\tau=\nu_0$). Thus, it is enough if we prove that
	\begin{equation}\label{eq:dom cond}
		1-H(\nu_n,\nu_0)\lesssim \norm{\nu_n-\nu_0}, d_{KL}(\nu_n-\nu_0)
	\end{equation} 
	which entails that $\sum_n\left(1-H(\nu_n,\nu_0)\right)$ converges by \eqref{eq:dist cond}. First, using \eqref{eq:L1 identity}, \eqref{eq:def dH} and
	$$
	\left ( \sqrt {
		\frac{d \nu }{d \tau }
	} - \sqrt {
		\frac{d \nu' }{d \tau }
	} \right )  ^ {2}(x)\le\left ( \sqrt {
		\frac{d \nu }{d \tau }
	} - \sqrt {
		\frac{d \nu' }{d \tau }
	} \right )(x)  \left ( \sqrt {
		\frac{d \nu }{d \tau }
	} + \sqrt {
		\frac{d \nu' }{d \tau }
	} \right )(x) =\left | 
	\frac{d \nu }{d \tau }
	- 
	\frac{d \nu' }{d \tau }
	\right |(x).
	$$
	assuming w.l.o.g. that $\left(\sqrt {\frac{d \nu }{d \tau }
	} - \sqrt {	\frac{d \nu' }{d \tau }}\right)(x)\ge 0$. For the second, 
	\begin{align*}
		d_{KL}(\nu\|\nu') &= \int \log \frac{\rho(x)}{\rho'(x)}\,\rho(x)dx= 2\int \log \frac{\sqrt{\rho(x)}}{\sqrt{\rho'(x)}}\,\rho(x)dx\\
		&= 2\int -\log \frac{\sqrt{\rho'(x)}}{\sqrt{\rho(x)}}\,\rho(x)dx\ge 2\int \left(1-\frac{\sqrt{\rho'(x)}}{\sqrt{\rho(x)}}\right)\,\rho(x)dx\\
		&= \int \left(1+1-2\sqrt{\rho(x)}\sqrt{\rho'(x)}\right)\,dx= \int \left(\sqrt{\rho(x)}-\sqrt{\rho'(x)}\right)^2\,dx= 1-H(\nu,\nu'),
	\end{align*}
	where we have used that $-\log(x)\ge 1-x$ and defined $\rho\coloneqq \frac{d\nu}{d\tau}$ and $\rho'\coloneqq \frac{d\nu'}{d\tau}$.
\end{proof}
As we see from \eqref{eq:dom cond}, it is enough for our purposes if the distance satisfies
\begin{equation}\label{eq:d condition}
	1-H(\nu_n,\nu_0)\lesssim  d(\nu_n-\nu_0).
\end{equation}
So this is the condition which defines $\cD$. More specifically, 
\begin{definition}\label{rem:def cD} Let $d$ a distance or divergence as above, then $d\in\cD$ iff $d$ satisfies \eqref{eq:d condition}.
\end{definition}
\begin{example}
	In the last part of the proof we showed that $d_{KL},\norm{\cdot}\in\cD$ so defined. But the set is larger than that, for instance,  Bhattacharyya distance is defined as:
	\begin{equation*}
		d_B(\nu,\nu')\coloneqq -\log H(\nu,\nu').
	\end{equation*}
	Then, $d_B\in\cD$ because $-\log(x)\ge 1-x$
\end{example}
We finish this section with the proof of Proposition \ref{prop:MoA cond}. We define:
\begin{equation*}
	\hat{p}^{\varepsilon_0}:=\begin{cases*}
		1 \text{ if } p\in[1-\varepsilon_0,1]\\
		0 \text{ if } p\in[0,1-\varepsilon_0)
	\end{cases*}
 \end{equation*}
with $\hat{p}^0=\tilde{p}$. Thus, we have that
\begin{equation*}
	\lim_{n\to\infty}\frac{1}{n}\sum_{i=1}^n \hat{p}^{\varepsilon_0}_i=\epsilon_{1-\varepsilon_0,1} ~~ \text{a.s.}
\end{equation*}
hence, by Egorov's Theorem the convergence is almost uniform. If $0<\varepsilon<\epsilon_{1-\varepsilon_0,1}$, then $\varepsilon'\coloneqq\epsilon_{1-\varepsilon_0,1}-\varepsilon>0$ so by the definition of limit for $n>N$ large enough
\begin{equation*}
	\big\lvert	\frac{1}{n}\sum_{i=1}^n \tilde{p}_i -\epsilon_{1-\varepsilon_0,1}\big\rvert<\varepsilon'\Rightarrow \frac{1}{n}\sum_{i=1}^n \tilde{p}_i> \varepsilon.
\end{equation*} 
The result follows from the fact that almost uniform convergence implies that this happens (with $N$ uniform) for a set of measure no less than $1-\delta$.

\section{Extending Theorem \ref{th:unweig and indep}, proofs and an example}\label{ap:extension}
We present a theorem which includes some cases not considered in Theorem \ref{th:unweig and indep}. There is some overlapping with Theorem \ref{th:unweig and indep} but we opted to give a self-contained and easier proof of that theorem in Appendix \ref{ap:proof unweig}. This makes the exposition clearer for some readers as in the next proof we will use more technical tools.
\begin{theorem}\label{th:unweig and indep and biased} If $\mu=\prod_{i=1}^{\infty}\nu_i$ is a measure such that:
	\begin{align}
		&\limsup_{n\to\infty}\frac{1}{\sqrt{n}}\sum_{i=1}^{n}\left( m_i-\frac{1}{2}\right)<\infty,\label{eq:gen cent cond}\\
		&\liminf_{n\to\infty}\frac{1}{{n}}\sum_{i=1}^{n}\left( m_i-m_i^2\right)>0,\label{eq:gen no conc}	\\
		&\limsup_{n\to\infty
		}{\frac{1}{n}\sum_{i=1}^n \epsilon_{1i}}<\frac12, \label{eq:gen eps1}
	\end{align}
	where $\epsilon_{1i}\coloneqq\nu_i(\{1\})$ and $\sigma_{T,n}\coloneqq\left(\sum_{i=1}^n \bE\left((p_i-m_i)^2\right)\right)^{\frac12}$ goes to infinity, then CJP does not hold $\mu$-almost surely, i.e., $\mu(\cC_I)=0$. 
\end{theorem}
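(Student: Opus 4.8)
The plan is to reproduce the architecture of the proof of Theorem~\ref{th:unweig and indep} in Appendix~\ref{ap:proof unweig}, replacing the i.i.d.\ limit theorems by their independent, non-identically distributed analogues; no absolute-continuity step (Lemma~\ref{lem:Kakutani}) is needed here, because $\mu=\prod_{i=1}^\infty\nu_i$ already makes the coordinates $p_i$ independent, with $\bE[p_i]=m_i$, $\var(p_i)=\bE[(p_i-m_i)^2]$ and $\bE[p_i(1-p_i)]=m_i-m_i^2$, the last being precisely the quantity controlled in \eqref{eq:gen no conc}. By \cite[Theorem~2]{BP98}, $\cC_I=\cA\cup\cB$, where $\cA$ is the event that \eqref{eq:first cond} holds, i.e.\ $Q_n\to\infty$ for $Q_n$ as in Appendix~\ref{ap:proof unweig}, and $\cB$ is the event that \eqref{eq:second cond} holds. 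It thus suffices to prove $\mu(\cA)=0$ and $\mu(\cB)=0$.

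For $\mu(\cA)=0$ I would control the numerator and denominator of $Q_n=\big(\sum_{i=1}^n(p_i-\tfrac12)\big)\big/\sqrt{\sum_{i=1}^n(p_i-p_i^2)}$ separately, after dividing both by $\sqrt n$. For the denominator, the variables $p_i-p_i^2$ are independent and uniformly bounded, so Kolmogorov's SLLN gives $\frac1n\sum_{i=1}^n(p_i-p_i^2)-\frac1n\sum_{i=1}^n\bE[p_i(1-p_i)]\to0$ a.s.; together with \eqref{eq:gen no conc} this yields $\liminf_n\frac1n\sum_{i=1}^n(p_i-p_i^2)\ge c>0$ a.s., so the denominator eventually grows at least like $\sqrt{cn}$. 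For the numerator, I would write $\sum_{i=1}^n(p_i-\tfrac12)=\sum_{i=1}^n(p_i-m_i)+\sum_{i=1}^n(m_i-\tfrac12)$: the deterministic term is $O(\sqrt n)$ by \eqref{eq:gen cent cond}, while the centred term, divided by $\sigma_{T,n}$, converges to $\cN(0,1)$ by the Lindeberg--Feller CLT (the Lindeberg condition is automatic, since $|p_i-m_i|\le1$ and $\sigma_{T,n}\to\infty$). Because $\sigma_{T,n}\le\sqrt n/2$, the centred term is $O_{\bP}(\sqrt n)$ as well, so the whole numerator is tight at scale $\sqrt n$. Hence $Q_n$ is tight (a tight numerator over a denominator bounded below), and the covering argument culminating in \eqref{eq:prob Qn inf} applies verbatim: picking $M_\varepsilon$ with $\mu(Q_n>M_\varepsilon)\le\varepsilon$ for all large $n$ gives $\mu(\cA)\le\varepsilon$ for every $\varepsilon>0$, whence $\mu(\cA)=0$.

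For $\mu(\cB)=0$ the argument is shorter. Setting $\tilde p_i\coloneqq\mathbbm{1}_{\{p_i=1\}}$, the $\tilde p_i$ are independent $\mathrm{Bernoulli}(\epsilon_{1i})$ variables, so $S_n=\sum_{i=1}^n\tilde p_i$ obeys, again by the SLLN for bounded independent summands, $\frac{S_n}{n}-\frac1n\sum_{i=1}^n\epsilon_{1i}\to0$ a.s. By \eqref{eq:gen eps1}, $\limsup_n\frac1n\sum_{i=1}^n\epsilon_{1i}<\tfrac12$, hence $\limsup_n S_n/n<\tfrac12$ a.s., so $S_n<n/2$ for all large $n$ almost surely; thus \eqref{eq:second cond} fails a.s.\ and $\mu(\cB)=0$. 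Combining the two, $\mu(\cC_I)=\mu(\cA\cup\cB)=0$.

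The main obstacle is the numerator/denominator interplay in the non-i.i.d.\ regime: unlike the i.i.d.\ case, the normalizing constants $\sigma_{T,n}$ and $\sqrt n$ need not be comparable, so one cannot simply invoke Slutsky's theorem to obtain a single non-degenerate limit for $Q_n$. The clean fix is to argue via \emph{tightness} rather than convergence in distribution, which is all the covering argument in \eqref{eq:prob Qn inf} actually requires. A second delicate point is the reading of \eqref{eq:gen no conc}: it must be understood as the first-minus-second-moment $m_i-m_i^2=\bE[p_i(1-p_i)]$ (matching the notation of \eqref{eq:moments} and the computation \eqref{eq:no conc}), for only then does it guarantee that the denominator of $Q_n$ stays bounded below, while simultaneously excluding the purely deterministic profiles (all mass of $\nu_i$ on $\{0,1\}$) that forced the separate random-walk argument in Appendix~\ref{ap:proof unweig}.
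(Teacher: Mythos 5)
Your proposal is correct and follows essentially the same route as the paper's proof in Appendix~\ref{ap:extension}: the same reduction $\cC_I=\cA\cup\cB$ via \cite[Theorem 2]{BP98}, Kolmogorov's SLLN to bound the denominator of $Q_n$ below and to kill the second condition through $S_n/n$, and the Lindeberg CLT feeding into the covering argument of \eqref{eq:prob Qn inf}. The only (harmless) variation is that you establish one-sided tightness of $Q_n$ directly, whereas the paper passes to a subsequence along which the deterministic normalizations converge and then applies Slutsky; both suffice for that covering argument.
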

\begin{example}\label{ex:bias measures th}
	The biased measures of Definition \ref{def:biased measure} are included in this theorem. Indeed, for $\mu_0$ with $b<0$, $m^1<\frac12$ so \eqref{eq:gen cent cond} holds because $m_i=m^1~\forall i\in\bN$. Condition \eqref{eq:gen no conc} holds if $\nu_0\neq \alpha\delta_0+(1-\alpha)\delta_1$ for\footnote{We arrive to the same conclusion in this case of $\nu_0= \alpha\delta_0+(1-\alpha)\delta_1$, but we should argue as in \eqref{eq:def Sn} and below.} some $1/2<\alpha\le 1$. Indeed,
	\begin{equation}
		m^1-m^2=\int_{[0,1]} (x-x^2)~ d\nu_0 (x)=\int_{(0,1)} (x-x^2)~ d\nu_0 (x)>0
	\end{equation}
	as long as $\nu_0((0,1))>0$. Now, as $m^1<1/2$, it must be that $\nu_0\left(\{1\}\right)=\epsilon_1< \frac12$ so \eqref{eq:gen eps1} holds trivially.

\end{example}
We can also extend this theorem for the case 
\begin{equation*}
	\limsup_{n\to\infty
	}{\frac{1}{n}\sum_{i=1}^n \epsilon_{1i}}=\frac12.
\end{equation*}
First, some definitions, recall \eqref{eq:def Sn},
\begin{align*}
	&\mathcal{B}_{n_0,n}\coloneqq \bigcap_{k=0 }^{(n-n_0)/{2}}\cB_{n_0+2k}= \{(p_i)_{i=1}^\infty\in[0,1]^\bN~/~ S_k>k/2~~\foralll k\in\{n_0,n_0+2,\ldots,n\}\},\\ &\mathcal{B}_{n_0,n}^b\coloneqq \{(p_i)_{i=1}^\infty\in[0,1]^\bN~/~ S_k>k/2~~\foralll k\in\{n_0,n_0+2,\ldots,n\}\text{ and }S_{n}=(n+1)/{2}\}.
\end{align*}
The first set is given by the sequences such that the sum satisfies $S_k>k/2$ as in condition \eqref{eq:second cond} for odd numbers between $n_0$ and $n$ and the second is a subset such that the last sum is in the border case $S_n=(n+1)/2$.
\begin{theorem}\label{th:extension}
	Assume that \eqref{eq:gen cent cond} and \eqref{eq:gen no conc} hold. If the $\liminf$ in \eqref{eq:gen eps1} is $1/2$ substitute \eqref{eq:gen eps1} for either there is a $n\ge n_0$ such that $\mu\left(\mathcal{B}_{n_0,n}\backslash\mathcal{B}_{n_0,n}^b\right)=0$ and $\mu(\{S^{n+2}_{n+1}=0\})$ or
	\begin{equation}\label{eq:border growth cond}
		\sum_{k=0}^\infty\mu(\mathcal{B}_{n_0,n_0+2k}^b~|~ \mathcal{B}_{n_0,n_0+2k}) (1-\epsilon_{1(n_0+2k+1)})(1-\epsilon_{1(n_0+2k+2)})=\infty~~\forall n_0\in\mathbb{O}.
	\end{equation}
	If so, we arrive at the same conclusion, the CJP does not hold $\mu$-almost surely. 
\end{theorem}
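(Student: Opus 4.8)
The plan is to invoke the dichotomy of \cite{BP98} recalled in Appendix \ref{ap:proof unweig}: the CJP holds iff at least one of the two conditions \eqref{eq:first cond} or \eqref{eq:second cond} holds, so $\cC_I$ is the union of the two corresponding events and it suffices to show that, under our hypotheses, each of them is $\mu$-null. The failure of \eqref{eq:first cond} is inherited almost verbatim from Theorem \ref{th:unweig and indep and biased}: I would write $Q_n$ with numerator $\frac{1}{\sqrt n}\sum_{i=1}^n(p_i-\frac12)$, split it as $\frac{1}{\sqrt n}\sum_{i=1}^n(p_i-m_i)+\frac{1}{\sqrt n}\sum_{i=1}^n(m_i-\frac12)$, use \eqref{eq:gen cent cond} to keep the deterministic drift bounded, a Lyapunov/Lindeberg CLT together with $\sigma_{T,n}\to\infty$ to control the centered fluctuation, and \eqref{eq:gen no conc} to bound the denominator below at scale $\sqrt n$; the continuity-of-measure argument of Appendix \ref{ap:proof unweig} then gives $\mu(\lim_n Q_n=\infty)=0$. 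Crucially, \eqref{eq:gen no conc} is exactly what excludes the purely atomic $\alpha\delta_0+(1-\alpha)\delta_1$ profiles, so $Q_n$ remains well-defined and asymptotically Gaussian even in the boundary regime $\limsup_n\frac1n\sum_{i=1}^n\epsilon_{1i}=\frac12$.

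The genuinely new work is to show that \eqref{eq:second cond} also fails $\mu$-a.s. at the boundary, since the SLLN argument of Appendix \ref{ap:proof unweig} (which used $\frac{S_n}{n}\to\epsilon_1<\frac12$) is no longer available. I would keep $\tilde p_i\sim\text{Bernoulli}(\epsilon_{1i})$ independent under $\mu=\prod_i\nu_i$ and $S_m=\sum_{i=1}^m\tilde p_i$, so that \eqref{eq:second cond} is the event $\cB=\bigcup_{n_0\in\mathbb{O}}\bigcap_{k\ge0}\cB_{n_0+2k}$. By countable subadditivity it is enough to prove $\mu(\bigcap_{k\ge0}\cB_{n_0+2k})=0$ for each fixed odd $n_0$. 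Writing this probability as the decreasing limit $\lim_{K}\mu(\mathcal{B}_{n_0,n_0+2K})$ and telescoping gives $\mu(\mathcal{B}_{n_0,n_0})\prod_{k\ge0}r_k$, where $r_k:=\mu(\cB_{n_0+2(k+1)}\mid\mathcal{B}_{n_0,n_0+2k})$; assuming $\mu(\mathcal{B}_{n_0,n_0})>0$ (otherwise the conclusion is immediate), the product vanishes iff $\sum_k(1-r_k)=\infty$.

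The key estimate is a lower bound on the one-step failure probability $1-r_k$. A sufficient way to drop below the threshold at step $m+2$, with $m=n_0+2k$, is to sit at the minimal margin $S_m=(m+1)/2$, i.e. to lie in $\mathcal{B}^b_{n_0,m}$, and then to draw two consecutive zeros $\tilde p_{m+1}=\tilde p_{m+2}=0$, after which $S_{m+2}=(m+1)/2<(m+2)/2$. Since the increments at $m+1,m+2$ are independent of the past, this yields $1-r_k\ge \mu(\mathcal{B}^b_{n_0,n_0+2k}\mid\mathcal{B}_{n_0,n_0+2k})(1-\epsilon_{1(n_0+2k+1)})(1-\epsilon_{1(n_0+2k+2)})$. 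Summing over $k$, hypothesis \eqref{eq:border growth cond} forces $\sum_k(1-r_k)=\infty$, hence $\prod_k r_k=0$ and $\mu(\bigcap_k\cB_{n_0+2k})=0$; a union over $n_0\in\mathbb{O}$ kills \eqref{eq:second cond}. For the alternative hypothesis (a), where some $n\ge n_0$ satisfies $\mu(\mathcal{B}_{n_0,n}\setminus\mathcal{B}^b_{n_0,n})=0$, surviving to step $n$ forces $S_n=(n+1)/2$ $\mu$-a.s., and the positive-probability event $\{S^{n+2}_{n+1}=0\}$ then drives the survival probability down; I would convert this into $\mu(\bigcap_k\cB_{n_0+2k})=0$ by a direct estimate rather than the product criterion.

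I expect the main obstacle to be the key estimate of the third paragraph: justifying rigorously that ``boundary at $m$ plus two zeros'' is contained in the one-step failure event with the claimed \emph{independent} factorization of probabilities, and verifying that the resulting lower bound is exactly the summand in \eqref{eq:border growth cond}. A secondary subtlety is ensuring that all conditional probabilities $r_k$ and $\mu(\mathcal{B}^b_{n_0,\cdot}\mid\mathcal{B}_{n_0,\cdot})$ are well-defined along the way (the conditioning events must have positive measure), and that the degenerate alternative (a) is disposed of without the product criterion.
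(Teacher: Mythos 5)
Your proposal follows essentially the same route as the paper: the $Q_n$/CLT part is inherited verbatim from Theorem \ref{th:unweig and indep and biased}, and for \eqref{eq:second cond} the paper derives exactly your telescoping product $\mu(\mathcal{B}_{n_0,n_0})\prod_k(1-\alpha_{n_0,n_0+2k}\beta_{n_0+2k})$ via the disjoint decomposition $\mathcal{B}_{n_0,n+2}=\{\mathcal{B}_{n_0,n}^b,\,S_{n+1}^{n+2}\ge1\}\sqcup(\mathcal{B}_{n_0,n}\setminus\mathcal{B}_{n_0,n}^b)$ — so your lower bound $1-r_k\ge\alpha_{n_0,n_0+2k}\beta_{n_0+2k}$ is in fact an exact identity, since surviving at step $n$ with margin strictly above the boundary guarantees survival at $n+2$ and the one-step failure event is precisely ``boundary plus two zeros'' — and then concludes by the same divergent-sum criterion for infinite products and the union over $n_0\in\mathbb{O}$. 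The only point of divergence is the degenerate alternative, which the paper resolves by noting that it corresponds to a factor $1-\alpha_{n_0,n}\beta_n=0$ of the product (i.e.\ it reads the garbled condition as $\mu(\{S^{n+2}_{n+1}=0\})=1$, not merely positive); your reading, in which $\{S^{n+2}_{n+1}=0\}$ has positive probability, would only shrink the survival probability by a factor $1-\beta_n<1$ and would not by itself force $\mu(\bigcap_k\cB_{n_0+2k})=0$, so the ``direct estimate'' you defer to must in fact use $\beta_n=1$.
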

\begin{remark}
	Some comments on the new hypothesis are in order. First, condition \eqref{eq:gen cent cond} is a generalization of the centered condition \eqref{eq:centered}. Second, condition \eqref{eq:gen no conc} is a generalization of $m^1>m^2$ that we saw in \eqref{eq:no conc}. Third, \eqref{eq:gen eps1} is a generalization of $\epsilon<1/2$ in the previous theorem. Condition \eqref{eq:border growth cond} can be used to treat the case of purely atomic measures like $\nu_0=\frac12 (\delta_0+\delta_1)$ where there is no uncertainty as either $p=0$ or $p=1$, see Appendix \ref{ap:extension}.
\end{remark}

The proof of Theorem \ref{th:unweig and indep and biased} is going to be similar (except Kakutani's Theorem) to the proof of Theorem \ref{th:unweig and indep}, but more technical. Some steps which are already there will be omitted here (but they will be properly referenced). As we did there, we define
\begin{equation*}
	Q_n\coloneqq\frac{\sum_{i=1}^n p_i-\frac{n}{2}}{\sqrt{\sum_{i=1}^n p_iq_i}}=\frac{\frac{1}{\sqrt{n}}\sum_{i=1}^n\left( p_i-\frac{1}{2}\right)}{\sqrt{\frac{1}{n}\sum_{i=1}^n{p_iq_i}}}.
\end{equation*}
As we said above, this quotient appears in the necessary and sufficient conditions of the CJT. 
$Q_n$ can be rewritten as:
\begin{equation*}
	Q_n=\frac{\frac{1}{\sqrt{n}}\sum_{i=1}^n\left( m_i-\frac{1}{2}\right)+\frac{\sigma_{T,n}}{\sqrt{n}}\frac{1}{\sigma_{T,n}}\sum_{i=1}^n\left( p_i-m_i\right)}{\sqrt{\frac{1}{n}\sum_{i=1}^n{\left(p_i-m_i-(p_i^2-m_i^2)+m_i-m_i^2\right)}}}.
\end{equation*}
where $\sigma_{T,n}\coloneqq\left(\sum_{i=1}^n \bE\left((p_i-m_i)^2\right)\right)^{\frac12}$. Now, by Kolmogorov's version of the SLLN (which we can apply because of \eqref{eq:moments}) we have almost surely:
\begin{equation*}
	\lim_{n\to\infty}\frac{1}{n}\sum_{i=1}^n{(p_i-m_i)}=0,\quad 	\lim_{n\to\infty}\frac{1}{n}\sum_{i=1}^n{(p_i^2-m_i^2)}=0.
\end{equation*}
Now, by the hypotheses of the theorem we can take a subsequence such that
\begin{align*}
	&\lim_{k\to\infty}\frac{1}{\sqrt{n_k}}\left(\sum_{i=1}^{n_k} m_i-\frac{1}{2}\right)=C<\infty,\\
	&\lim_{k\to\infty}\frac{1}{{n_k}}\left(\sum_{i=1}^{n_k} m_i-m_i^2\right)=c>0.	
\end{align*}
Finally, let us note that by \eqref{eq:moments},
\begin{equation*}
	\frac{\sigma_{T,n}}{\sqrt{n}}\le 1.
\end{equation*}
Similarly,
\begin{equation*}
	\frac{\sum_{i=1}^n\bE\left(p_i-m_i\right)^3}{\sigma_{T,n}^3}\le \frac{\sigma_{T,n}^2}{\sigma_{T,n}^3}\to 0
\end{equation*}
as $\sigma_{T,n}\to\infty$. Thus, Lyapunov's condition holds, so we can apply Lindeberg’s CLT. Hence, taking a subsequence (that we relabeled again) and using Slutsky Theorem as before,
\begin{equation*}
	Q_{n_k}\to \cN(\mu_Q,\sigma_Q^2)
\end{equation*}
in distribution as $k\to\infty$ for some $\mu_Q,\sigma_Q\in\bR$. Then we can apply \eqref{eq:prob Qn inf} to conclude that the first condition of \cite[Theorem 2]{BP98} does not hold almost surely.
Let us turn now to the second condition. For that purpose, as we did above, we define:
\begin{equation*}
	\tilde{p}:=\begin{cases*}
		1 \text{ if } p=1\\
		0 \text{ if } p\in[0,1)
	\end{cases*}
\end{equation*}
We also define the sums:
\begin{equation*}
	S_{n_0}^n\coloneqq|\{i:1\le n_0\le n \text{ and } p_i=1\}|=\sum_{i=n_0}^n \tilde{p}_i.
\end{equation*} 
First, if
\begin{equation}\label{eq:liminf eps}
	\limsup_{n\to\infty
	}{\frac{1}{n}\sum_{i=1}^n \epsilon_{1i}}<\frac12,
\end{equation}
by Kolmogorov's SLLN then,
\begin{equation*}
\lim_{n\to\infty}\left(\frac{1}{n}\sum_{i=1}^n \tilde{p}_i-\frac{1}{n}\sum_{i=1}^n \epsilon_{1i}\right)=	\limsup_{n\to\infty}\left(\frac{1}{n}\sum_{i=1}^n \tilde{p}_i-\frac{1}{n}\sum_{i=1}^n \epsilon_{1i}\right)=0~~\text{a.s}.
\end{equation*}
Using that $\limsup_{n\to\infty}(x_n)+\liminf_{n\to\infty}(y_n)\le\limsup_{n\to\infty}(x_n+y_n)$, then
\begin{equation*}
	\limsup_{n\to\infty}\frac{1}{n}\sum_{n=1}^\infty \tilde{p}_n= \limsup_{n\to\infty
	}{\frac{1}{n}\sum_{i=1}^n  \epsilon_{1i}}<\frac12
\end{equation*}
almost surely. Thus, for large enough $n_0$,
\begin{equation*}
	\sup_{n\ge n_0}	\frac{S_n}{n}<\frac{1}{2}-\delta
\end{equation*}
for some $\delta>0$, therefore violating the second condition for the CJT. This finishes the proof of Theorem \ref{th:unweig and indep and biased}.

Let us prove Theorem \ref{th:extension}. Recall that we defined the sets, for $n_0\le n$ odd numbers:
\begin{align*}
	&\mathcal{B}_{n_0,n}\coloneqq \bigcap_{k=0 }^{(n-n_0)/{2}}\cB_{n_0+2k}= \{(p_i)_{i=1}^\infty\in[0,1]^\bN~/~ S_k>k/2~~\foralll k\in\{n_0,n_0+2,\ldots,n\}\},\\ &\mathcal{B}_{n_0,n}^b\coloneqq \{(p_i)_{i=1}^\infty\in[0,1]^\bN~/~ S_k>k/2~~\foralll k\in\{n_0,n_0+2,\ldots,n\}\text{ and }S_{n}=(n+1)/{2}\}.
\end{align*}
If the $\liminf$ in \eqref{eq:liminf eps} is 1/2, note that
\begin{equation*}
	\mathcal{B}_{n_0,n+2}=\{(p_i)_{i=1}^\infty\in\mathcal{B}_{n_0,n}^b~/~S_{n+1}^{n+2}\ge 1\}\bigsqcup\left(\mathcal{B}_{n_0,n}\backslash\mathcal{B}_{n_0,n}^b\right),
\end{equation*}
where $\sqcup$ denotes a disjoint union. The idea is that a sequence is in $\mathcal{B}_{n_0,n+2}\subset \mathcal{B}_{n_0,n}$ because it satisfies either $S_{n_0}^{n}=(n+1)/2$ (so we need that the next two summands are at least 1) or  $S_{n_0}^{n}>(n+1)/2$ so the sequence is in $\mathcal{B}_{n_0,n+2}$, independently of the next two summands. Thus, applying the (product) measure we will obtain the following recurrence relation:
\begin{align*}
	\mu(\mathcal{B}_{n_0,n+2})&=\mu(\mathcal{B}_{n_0,n})\left(\mu(\mathcal{B}_{n_0,n}^b~|~ \mathcal{B}_{n_0,n})\mu\left(\{S_{n+1}^{n+2}\ge1\}\right)+	1-\mu(\mathcal{B}_{n_0,n}^b~|~ \mathcal{B}_{n_0,n})\right)\\
	&=\mu(\mathcal{B}_{n_0,n})\left(1-\alpha_{n_0,n}\beta_n\right),
\end{align*}
where
\begin{equation*}
	\mu(\mathcal{B}_{n_0,n}^b~|~ \mathcal{B}_{n_0,n})\coloneqq \frac{\mu\left(\mathcal{B}_{n_0,n}^b\cap \mathcal{B}_{n_0,n}\right)}{\mu\left(\mathcal{B}_{n_0,n}\right)}=:\alpha_{n_0,n}~\text{ and }~\beta_n\coloneqq \mu\left(\{S_{n+1}^{n+2}=0\}\right).
\end{equation*}
Note that $\beta_n=(1-\epsilon_{1(n+1)})(1-\epsilon_{1(n+2)})$. Thus, for $n>n_0$ both odd:
\begin{equation}\label{eq:border cond orig}
	\mu(\mathcal{B}_{n_0,n+2})=\mu(\mathcal{B}_{n_0,n_0})\prod_{k=0}^{\frac{n-n_0}{2}}(1-\alpha_{n_0,n_0+2k}\beta_{n_0+2k}).
\end{equation}
Now we take the limit $n\to\infty$. If there is some $k$ such that $(1-\alpha_{n_0,n_0+2k}\beta_{n_0+2k})=0$, the product is zero and this corresponds to the first condition of Theorem \ref{th:extension}. Otherwise, as we saw in the proof of Lemma \ref{lem:Kakutani}, this infinite product will be zero iff 
\begin{equation*}
	\sum_{k=0}^{\infty}\alpha_{n_0,n_0+2k}\beta_{n_0+2k}=\infty.
\end{equation*}
Then, similarly as we did above,
\begin{equation*}
	\mu\left(\cB_{n_0,\infty}\coloneqq\bigcap_{k=0}^\infty\cB_{n_0,n_0+2k}\right)=\lim_{k\to\infty}\mu\left(\cB_{n_0,n_0+2k}\right)=0.
\end{equation*}
Therefore, $\mu\left(\cB\right)\le\sum_{n_0}\mu(\cB_{n_0,\infty})=0.$
This finishes the proof.
\subsection{Example of application to the case $\nu_0=\frac12\left(\delta_0+\delta_1\right)$: some combinatorics}\label{sec:example comb}

Let us consider how to apply \eqref{eq:border growth cond}. For simplicity, assume $n_0=1$. In order to understand the set $\mathcal{B}_{1,n}^b$, we need to know how many points will satisfy $S_{1+2k}\ge k+1$ for $k=0,\ldots,(n-1)/2$ and $S_{n}=(n+1)/2$. For $k=0$, the only possibility is $S_1=1$. Thus, we need to see the number of ways in which $S_2^{n}=\frac{n-1}{2}$ such that $S_2^{1+2k}\ge k$ for $k=0,\ldots,(n-1)/2-1$. We can see this graphically if we consider a grid where $\tilde{p}_i=1$ is translated into moving up and $\tilde{p}_i=0$ is translated into moving to the right. The conditions above are equivalent to the condition that for every point $(x,y)$ of the path such that $x+y=2k$, then $x\le y$ and  the end point is $(m,m)$ where $n=2m+1$. This is illustrated in Figure \ref{fig:Fig1}. Note that the blue path satisfies these conditions while the red one does not because of the point $(3,1)$ in black.
\begin{figure}[h]
	\centering
	\includegraphics[width=0.5\linewidth]{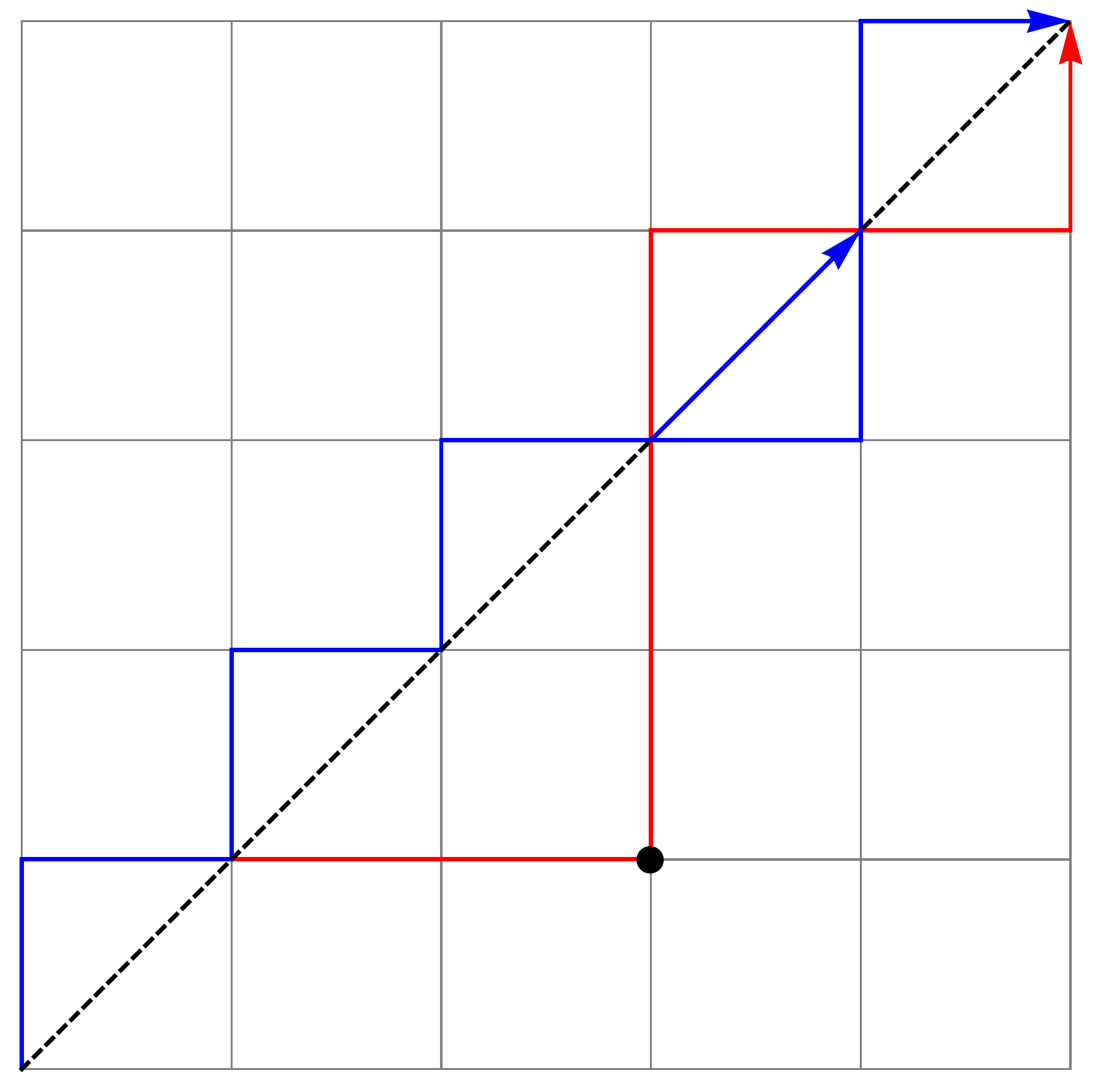}
	\caption{Illustration for $n=11$.}
	\label{fig:Fig1}
\end{figure}
Our conditions are not the same as lying above the diagonal (dashed line), as the blue path is below it at $(4,3)$. Nevertheless, if we allow the path to move $(1,1)$ at points on the diagonal, like the blue arrow in Figure \ref{fig:Fig1} shows, we can consider that the allowed paths are always above the diagonal. So the problem reduces to counting the total number of these paths.   

In order to do so, we are going to establish some bijections as it is standard in combinatorics. First, if we change the movement (1,1) to (0,1), there is a bijection with the paths starting at (0,0) and ending on $\{(x,m)~:~x\in\bN\}$. Second, if we add to these paths the movement (0,1) and the complete them with $(1,0)$ till they reach the diagonal, there is a bijection with the paths starting at $(0,0)$ and ending at $(m+1,m+1)$ without going below the diagonal. It is standard\footnote{For instance, this is the number of Dyck paths, see Problem 28, 52 and Theorem 1.5.1  in \cite{Sta15} for more details on the bijections. Also, these numbers appear in the ballot problem: suppose $A_1$ and $A_2$ are candidates for some election and there are an even number of voters, say $2n$. Let us also assume that $n$ voting for $A_1$ and $n$ for $A_2$. In how many ways can the ballots be counted so that $A_1$ is always ahead of or tied with $A_2$? See the aforementioned theorem.} that the total number of paths is $C_{m+1}$, where $C_n$ represents the $n$-th Catalan number, i.e.,
\begin{equation*}
	C_n\coloneqq \frac{1}{n+1}\binom{2n}{n}.
\end{equation*}
Now, note that we can express \eqref{eq:border growth cond} equivalently as,
\begin{equation*}
	\mu(\mathcal{B}_{n_0,n})=\mu(\mathcal{B}_{n_0,n_0})-\sum_{i=0}^{\frac{n-n_0}{2}-1}\beta_{n_0+2i}\gamma_{n_0,n_0+2i}
\end{equation*}
using the equations above \eqref{eq:border cond orig} where
\begin{equation*}
	\gamma_{n_0,n}\coloneqq\mu\left(\mathcal{B}_{n_0,n}\right)\alpha_{n_0,n}=\mu\left(\mathcal{B}^b_{n_0,n}\right).
\end{equation*}
This is useful because in our case of $n_0=1$ and $n=2m+1$ we can compute that sum easily,
\begin{equation*}
	\mu(\mathcal{B}_{n_0,n_0})-\sum_{i=0}^{\frac{n-n_0}{2}-1}\beta_{n_0+2i}\gamma_{n_0,n_0+2i}=\frac{1}{2}-\frac{1}{4} \sum _{i=0}^{m-1} \frac{C_{i+1}}{2^{2 i+1}}=2^{-2 (m+1)} \binom{2 (m+1)}{m+1}.
\end{equation*}
By Stirling's approximation,
\begin{equation*}
	\mu(\mathcal{B}_{1,2m+1})={\sqrt{\frac{1}{\pi m}}}+O\left(m^{-3/2}\right)\to 0
\end{equation*}
as $m\to\infty$. 
\section{Proof of Theorem \ref{thm:measure weighted CJT}}
\label{ap:proof weight}
	First note that the first hypotheses ensure that $w\in[1,W]$ as if $p$ is given, then $a(p)=1-w_d(p)$, $b(p)=W-w_d(p)$ and  $\varepsilon=w-w_d(p)$. Let us explore the first condition of Proposition \ref{prop:weighted CJT}:
	\begin{equation}
		w(p-q)=-(1+\varepsilon)+2(1+\varepsilon)p+(1-W) p^k+2 (W-1)p^{k+1}.
	\end{equation}
	We can analyze the expected values. As $p\in[0,1]$, then
	\begin{equation*}
		\bE\left(p^k\right)\ge\bE\left(p^{k+1}\right).
	\end{equation*}
	Nevertheless,
	\begin{equation}\label{eq:Holder}
		\bE\left(p^{k+1}\right)^\frac{1}{k+1}\ge\bE\left(p^{k}\right)^\frac{1}{k}
	\end{equation}
	by H\"older's inequality. 
	Let us show that there is a $k$ such that:
	\begin{equation*}
		2\bE\left(p^{k+1}\right)-\bE\left(p^{k}\right)>0.
	\end{equation*} 
	Indeed\footnote{We define $\nu_0\coloneqq \tilde{\nu}_0\circ\pi$ where $\pi(p,\varepsilon)=p$, i.e., the push-forward measure. Thus, $\tilde{\nu}_0(\{p\in A\})=\nu_0(A)$.},
	\begin{equation*}
		2m^{k+1}-m^k=\int_{[0,1/2)}x^k(2x-1)d\nu_0(x)+\int_{(1/2,1]}x^k(2x-1)d\nu_0(x).
	\end{equation*}
	Thus,
	\begin{equation*}
		2^k\left(2m^{k+1}-m^k\right)=\int_{[0,1/2)}2^kx^k(2x-1)d\nu_0(x)+\int_{(1/2,1]} 2^kx^k(2x-1)d\nu_0(x).
	\end{equation*}
	By the Dominated Convergence Theorem, the first term goes to zero as $k\to\infty$ and $2^kx^k(2x-1)\to\infty$ as $k\to\infty$ for the second term, so there is a $k$ such that the LHS is positive.
	
	Now we analyze the expectations that involve the error term $\varepsilon$, in particular,
	\begin{equation*}
		\bE\left((2p-1)\varepsilon\right)=\bE\left((2p-1)\bE\left(\varepsilon|p\right)\right)
	\end{equation*}
	by the law of iterated expectations. It is well-known that
	\begin{equation}\label{eq:mean truncated gaussian}
		\bE\left(\varepsilon|p\right)=\sigma {\frac {\phi (\alpha )-\phi (\beta )}{\Phi (\beta )-\Phi (\alpha )}},
	\end{equation}
	where $\alpha\coloneqq\frac{a}{\sigma}$, $\beta\coloneqq\frac{b}{\sigma}$, $\phi$ the p.d.f. of a standard Gaussian function and $\Phi$ its c.d.f.
	Then,
	\begin{equation*}
		\bE\left(\varepsilon|p\right)=(W-1)f(x,p),
	\end{equation*}
	where $x\coloneqq (W-1)/\sigma$ and
	\begin{equation*}
		f(x,p)\coloneqq \frac{\phi ((1-p) x)-\phi (-p x)}{x (\Phi (-p x)-\Phi ((1-p) x))}.
	\end{equation*}
	It is straightforward to check that for $p\in(0,1)$, $\bE\left(\varepsilon|p\right)\to 0$ exponentially as $x\to\infty$ because
	\begin{equation*}
		\alpha=\frac{(1-W)p}{\sigma}=-px\to -\infty,\quad \beta=\frac{(W-1)(1-p)}{\sigma}=(1-p)x\to\infty
	\end{equation*}
	as $x\to\infty$. If $p=0$, then $\beta$ still goes to infinity and if $p=1$, $\alpha$ still goes to $-\infty$.
	By the Dominated Convergence Theorem (as \eqref{eq:mean truncated gaussian} ensures the integrand is bounded by continuity on a compact set) we conclude that:
	\begin{equation*}
		\lim_{x\to\infty}\bE\left((2p-1)f(x,p)\right)=0.
	\end{equation*}
	Therefore,
	\begin{equation*}
		\frac{1}{n}\sum_{i=1}^n w_i(p_i-q_i)\to 2 m^1-1+(W-1)\bE\left(2 p^{k+1}-p^k+(2p-1)f(x,p)\right)~\text{a.s.}
	\end{equation*}
	as $n\to \infty$ by the SLLN. By the discussion above, if $x,W$ are large enough, then the limit is positive.
	
	For the denominator of the first condition in Proposition \ref{prop:weighted CJT} we know that
	\begin{equation*}
		\bE\left(w^2p(1-p)\right)>0
	\end{equation*}
	as $w\ge 1$, $p(1-p)\ge 0$ if $p\neq0,~p\neq1$ $\nu_0$-almost surely. The first case is rejected because $\delta_0((1/2,1])=0$ and if $\nu_0=\delta_1$, then the CJT holds for $W=1$ trivially (in this case we do not need $W$ large). Thus, by the SLLN again,
	\begin{equation*}
		\frac{\sum_{i=1}^n w_i(p_i-q_i)}{\sqrt{\sum_{i=1}^n w_i^2p_iq_i}}=\frac{\sqrt{n}\frac{1}{n}\sum_{i=1}^n w_i(p_i-q_i)}{\sqrt{\frac{1}{n}\sum_{i=1}^n w_i^2p_iq_i}}\to \infty.
	\end{equation*}
	Hence, the CJP now has $\mu_0$-measure equal to one, i.e., $\mu_0(\cC_I^w)=1$. As we did in the proof of Theorem \ref{th:unweig and indep}, the same holds for a ``deviation'' of this measure. Indeed, it follows from $\mu\ll\mu_0$ and the fact that the complement of $\cC_I^w$ is a $\mu_0$-null set.

\begin{remark}
	We could use weaker hypotheses, as in Theorem \ref{th:unweig and indep and biased}, nevertheless we opted for maintaining the simplicity. For instance, we could replace the independence of $\varepsilon$ by ergodicity and use the Ergodic Theorem instead of the SLLN, replace the Gaussianity by a nice enough distribution or make the parameters of the distribution depend on $p$. 
\end{remark}
\section{Practical implementation of epistemic weights: from psychology and political philosophy}\label{ap:psy and phil}
First, we must not confuse competence ($p$ near one) with other attributes that can be, under some conditions, correlated with competence, such as fluid or crystallized intelligence. Following \cite{Kah03} we can classify cognitive processes into two broad categories: System 1 (intuition) and System 2 (reasoning). The former is autonomous (executed automatically upon encountering the triggering stimulus and independent on input from high-level control systems). Furthermore, it is fast, emotional and relies on heuristics that can lead to biases. System 2 is slow, effortful, analytic... Many processes of System 1 can operate at once in parallel, but System 2 processing is largely serial. But we can split  System 2 further into two ``minds'', \cite{Sta09}, the algorithmic and reflective mind. The former deals with slow thinking and demanding computations (fluid intelligence, which IQ tests try to measure) and the latter is related to rational thinking dispositions and its functions are to \textit{initiate} the override biased responses of System 1, the ones based on a ``focal model" which can be biased or the simulation of alternative responses. Thus, rationality is a combination of both minds\footnote{Also, the autonomous mind or System 1 can provide rational responses as it might contain normative rules that have been tightly compiled and that are automatically activated as a result of overlearning and practice.}, not just the algorithmic one. Obviously, these systems need knowledge to work properly (and the one acquired through learning and past experiences is usually called crystallized intelligence), see Figure 3.3 of \cite{Sta09}. Nevertheless, note that some knowledge can be useless or harmful for achieving competence (``contaminated mindware'', \cite{Sta09}) or, even if necessary, remain unused, as in the ``override failure''. To be more specific, \cite{SW08}:
\begin{quote}
	...the relevant mindware for our present discussion is not just generic procedural knowledge, nor is it the hodge-podge of declarative knowledge that is often used to assess crystallized intelligence on ability tests. Instead, it is a very special subset of knowledge related to how one views probability and chance; whether one has the tools to think scientifically and the propensity to do so; the tendency to think logically; and knowledge of some special rules of formal reasoning
	and good argumentation.
\end{quote}

Thus, we must note that competence could not be achieved even if the algorithmic mind is ``highly developed''. There is evidence that thinking errors are relatively independent of cognitive abilities \cite{SW08}. For instance, there is not a significant correlation between the magnitude of some classical bias popularized by Kahneman \cite{Kah11} (e.g., anchoring effects or conjunction fallacy) and cognitive abilities. Another important example is the so-called myside bias (``people evaluate evidence, generate evidence, and test hypotheses in a manner biased toward their own prior opinions and attitudes''). The authors conjecture that fluid intelligence is only important when there is not a mindware gap (e.g., missing probability or scientific knowledge) and the need to override heuristic responses is detected. This is the case, e.g., in the rose syllogism (all flowers have petals; roses have petals; therefore, roses are flowers--which is invalid) and the belief bias, but not in the Linda problem between-subjects and conjunction fallacy. This feature of Linda problem illustrates an important fact; it is not enough to have the knowledge (here, basic probabilistic knowledge, $\bP(A\cup B)\ge \bP(A)$), but \textit{we must have the tendency to use it when needed}, specially when there are no cues to do so. Thinking dispositions, in contrast to cognitive abilities, are viewed as more malleable and this would predict that these skills are more teachable. As we were saying, there are some biases which are correlated with cognitive abilities as the argument evaluation test (\cite{SW08}), but they are not naturalistic or similar to a real-life situation because subjects have been told to decouple prior beliefs from the evaluation of evidence. Then the correlation happens because ``participants of differing cognitive abilities have different levels of computational power available for the override operations that make decoupling possible'', \cite{Sta13}.

Furthermore, more fluid intelligence could be even worse for myside bias. Indeed, in \cite{KPCS17} (see also references therein for more evidence) we can see why. In this experiment, subjects must draw valid causal inference from empirical data. The same empirical data is presented in two ways: in not an ideologically loaded way (skin-rash treatment) and as a partisan issue (gun-control ban). In the former (as expected), the higher the numeracy, the better the responses, but in the latter responses became polarized between liberal democrats and conservative republicans, less accurate and got worse for subjects with higher numeracy skills (algorithmic intelligence). Thus, this could be seen as a conflict between being epistemically rational (fitting one's beliefs to the real world, what is true) and instrumentally rational (optimizing goal fulfillment, what to do). This motivated reasoning can be the means to achieve our goals because sharing some political views is a symbol of membership and loyalty in political groups, expressive rationality, which can be more valuable than epistemic goals. In our day-to-day actions having true beliefs (epistemic rationality) is useful for achieving our goals (instrumental rationality). More precisely, if $\cA\coloneqq\{a_i\}_{i\in\cI}$ are the possible actions, $\cS\coloneqq\{s_j\}_{j\in\cJ}$ the possible states of the world and $\varphi:\cA\times\cS\to\cS$ maps the consequences of the action in each state of the world, then
\begin{equation*}
	U(a)=\sum_{j\in\cJ}\mathcal{P}(s_j\mid a)\cdot u\left(\varphi(s_j,a)\right), 
\end{equation*}
where $U$ is the von Neumann–Morgenstern utility function and $\mathcal{P}$ assigns probabilities to each state of the world. In order to maximize $U$, $\max_{a\in\cA}U(a)$ (instrumental rationality), we need to have correct beliefs about the world, $\cA$, $\cS$, $\mathcal{P}$ and $\varphi$, i.e., epistemic rationality. But in the political process our beliefs are dissociated from their consequences (one's beliefs on gun-control bans are unlikely to affect political decisions and their consequences), so expressive rationality makes perfect sense as epistemic and instrumental rationality are not necessarily linked and having true beliefs about the world could be less valuable than rejecting our previous beliefs or shared beliefs with our political group. As the social psychologist Jonathan Haidt puts it, we are good rationalizers but poor reasoners when thinking about politics. To achieve an epistemically rational response it could be more useful, for instance, to adopt measures that effectively shield decision-relevant science from the identity-protective motivated reasoning: behaving like a sport hooligan should not be seen an appropriate way to process information. In a recent (preregistered) replication of this study \cite{PAKVT21}, the effect of motivated reasoning was found but it was less clear the motivated numeracy (motivated reasoning increases with numeracy) finding. In another study, \cite{KS16}, they corroborate the same hypothesis of expressive rationality using beliefs about human evolution.

Hence, algorithmic intelligence might not be sufficient for rational thinking and not as necessary as one could initially think, for instance, if epistemically reliable shortcuts are available instead of a direct investigation or simulation of alternative responses. For example, if there is consensus between experts, take that as the most likely option. This could reduce the need for algorithmic intelligence but it does not eliminate some minimal amount; finding a reliable shortcut is a computation demanding process.  One should be cautious when assessing weights because they must be correlated with epistemic rationality and the relation between this and other typical measures of intelligence or knowledge is not trivial as we have seen. For instance, one proposal could be \cite{Sta16,StaRQ16} (total or partial subsets of the CART focused on epistemic rationality) in combination with particular knowledge (mindware) or skills (algorithmic mind) needed for competence in the particular domain of the choice we face. Any other metric that is correlated with this assessment or a similar one could be used too. 

Obviously the weight assignment will depend on the particular process under consideration. The assignment for a jury in a criminal trial will not be the same as the one for a democratic process (where part of the evidence presented above fits better). Nevertheless, the main idea still holds: a major part of the assignment should be based on epistemic rationality. But particular mindware should be considered in each situation. For instance, law and the particular criminal evidence for a trial and some basic knowledge of social sciences for a democratic process. Notice that some topics are more prone than others to be solved as epistemic rationality increases. For instance, discussing the means to achieve an agreed end can be easier than discussing the ends we should pursue.

Second, we could think that a more natural way to achieve \eqref{eq:CJT cond} is to exclude voters with $p<1/2$ (that is, $w=0$), which will imply $b>0$ and the CJP will hold almost surely (similar proof as Theorem \ref{th:unweig and indep} or \ref{th:unweig and indep and biased}). That is, as we said in the introduction we could consider:
\begin{itemize}
	\item $w_i=0$ if $p_i\le1/2$ (similar to expert rule) or,
	\item $w_i<0$ if $p_i<1/2$, as in \eqref{eq:optimal weights}.
\end{itemize}
Nevertheless, we opted to investigate the case of $w\ge 1$, i.e., all votes count (obviously, not in the same proportion) for several reasons. One is that it might be objected that in some circumstances not allowing some voters to participate can express disrespect, i.e., a semiotic objection based on the expressive value of the democratic process, \cite[Chapter 5]{Bre16}. To analyze it, the right of a competence decision process must be weighted against the somehow socially conceived expressive value of the restrictions. But in the setting where every potential voter is guaranteed a minimal weight, $w\ge 1$ for every voter, these objections are less motivated. In fact, votes have different weights in many present processes, although they are not usually weighted according to competence but other factors.
\bibliographystyle{siam}
{\bibliography{BibCJT}}
\end{document}